\documentclass[preprint]{elsarticle}

\usepackage{lineno,hyperref}
\modulolinenumbers[5]

%\journal{Journal of \LaTeX\ Templates}

%%%%%%%%%%%%%%%%%%%%%%%
%% Elsevier bibliography styles
%%%%%%%%%%%%%%%%%%%%%%%
%% To change the style, put a % in front of the second line of the current style and
%% remove the % from the second line of the style you would like to use.
%%%%%%%%%%%%%%%%%%%%%%%

%% Numbered
%\bibliographystyle{model1-num-names}

%% Numbered without titles
%\bibliographystyle{model1a-num-names}

%% Harvard
%\bibliographystyle{model2-names.bst}\biboptions{authoryear}

%% Vancouver numbered
%\usepackage{numcompress}\bibliographystyle{model3-num-names}

%% Vancouver name/year
%\usepackage{numcompress}\bibliographystyle{model4-names}\biboptions{authoryear}

%% APA style
%\bibliographystyle{model5-names}\biboptions{authoryear}

%% AMA style
%\usepackage{numcompress}\bibliographystyle{model6-num-names}

%% `Elsevier LaTeX' style
\bibliographystyle{elsarticle-num}
%%%%%%%%%%%%%%%%%%%%%%%
\usepackage{amsmath,amssymb,amsthm,mathtools,bm,tikz-cd,xfrac} % For including math equations, theorems, symbols, etc

\usepackage{varioref} % More descriptive referencing

\usepackage{comment}

\usepackage{pgfplots,tikz}
\usetikzlibrary{calc}
\usetikzlibrary{arrows.meta,positioning}
\usepackage{xfrac}
\usepackage{todonotes}

%%%%%% DEFINICE NEJPOUZIVANEJSICH MAKER %%%%%%%
%%%%%% NUMBER SYSTEMS

\newcommand{\R}{\mathbb{R}}
%\newcommand{\C}{\mathbb{C}}

%%%%%% CALS
\newcommand{\calA}{\mathcal{A}}

\newcommand{\calC}{\mathcal{C}}
\newcommand{\calD}{\mathcal{D}}

\newcommand{\calL}{\mathcal{L}}

\newcommand{\calP}{\mathcal{P}}

\newcommand{\calS}{\mathcal{S}}
\newcommand{\calT}{\mathcal{T}}

\newcommand{\calX}{\mathcal{X}}
\newcommand{\calY}{\mathcal{Y}}

\newcommand{\Prob}{\mathbb{P}}
\newcommand{\sfG}{\mathsf{G}}

% MATH OPERATORS

\DeclareMathOperator*{\argmax}{arg\,max}

%%%%%% REDEFINITIONS
\renewcommand{\phi}{\varphi}
\renewcommand{\epsilon}{\varepsilon}

%%%%%% MISC

\newcommand{\bx}{\bm{x}}
\newcommand{\bX}{\bm{X}}

% \mathclap posune sumu a scitance pro velkou indexovou mnozinu 

\def\clap#1{\hbox to 0pt{\hss#1\hss}}

%%%%%%%%%%%%%%%%%%%%%%%%%%%%%%%%%%%%%%%%%%%%%%%%%%%%%%%%%%%%%%%

%%%%%% ZMENA NEKTERYCH PRIKAZU %%%%%%%%

\newcommand{\braket}[1]{\langle #1 \rangle}
\renewcommand{\phi}{\varphi}
%renewcommand{\downarrow}{\downarrow}
%\renewcommand{\emptyset}{\varnothing}

%----------------------------------------------------------------------------------------
%	THEOREM STYLES
%---------------------------------------------------------------------------------------

\theoremstyle{definition} % Define theorem styles here based on the definition style (used for definitions and examples)
\newtheorem{definition}{Definition}

\theoremstyle{plain} % Define theorem styles here based on the plain style (used for theorems, lemmas, propositions)
\newtheorem{proposition}{Proposition}
\newtheorem{lemma}{Lemma}

\newtheorem{example}{Example}
\newtheorem{remark}{Remark}

\theoremstyle{remark} % Define theorem styles here based on the remark style (used for remarks and notes)
\newtheorem{convention}{Convention}

\begin{document}

\begin{frontmatter}

\title{Values of Games for Information Decomposition}
%\tnotetext[mytitlenote]{Fully documented templates are available in the elsarticle package on \href{http://www.ctan.org/tex-archive/macros/latex/contrib/elsarticle}{CTAN}.}

%% Group authors per affiliation:
\author{Tomáš Kroupa}

\ead{tomas.kroupa@fel.cvut.cz}

%% or include affiliations in footnotes:
\author{Sara Vannucci}

\ead{vanucsar@fel.cvut.cz}

\author{Tomáš Votroubek}

\ead{votroto1@fel.cvut.cz}

\address{Artificial Intelligence Center, Department of Computer Science,\\ Faculty of Electrical Engineering, Czech Technical University in~Prague,\\Karlovo náměstí 13, 12O OO Praha 2,  Czech Republic}

\begin{abstract}
    The information decomposition problem requires an additive decomposition of the mutual information between the input and target variables into nonnegative terms. The recently introduced solution to this problem, Information Attribution, involves the Shapley-style value measuring the influence of predictors in the coalitional game associated with the joint probability distribution of the input random vector and the target variable. Motivated by the original problem, we consider a general setting of coalitional games where the players form a boolean algebra, and the coalitions are the corresponding down-sets. This enables us to study in detail various single-valued solution concepts, called values. Namely, we focus on the classes of values that can represent very general alternatives to the solution of the information decomposition problem, such as random-order values or sharing values. We extend the axiomatic characterization of some classes of values that were known only for the standard coalitional games.
\end{abstract}

\begin{keyword}
information decomposition\sep coalitional game \sep value theory \sep Shapley value \sep sharing value \sep random-order value
\MSC[2010] 94A15 \sep 94A17 \sep  91A12 \sep 91A80
\end{keyword}

\end{frontmatter}

%\linenumbers

\section{Introduction}
Measures of information content are widely used in statistics, artificial intelligence, and machine learning. The Shannon entropy, Kullback-Leibler divergence, and other information-theoretic quantities are highly instrumental in learning or fitting parameters to data. Several extensions of such information measures have been introduced to tackle the problems arising in alternative uncertainty calculi as well. The recent paper \cite{Bouchon} discusses axiomatic aspects of generalized information measures. As for the applications to machine learning, a new feature selection method based on an alternative to the joint mutual information is designed, and its performance is evaluated in \cite{Salem}. Mutual information is an essential tool in modern methods for explaining the predictive power of features in complex black-box classifiers such as deep neural networks. A case in point is the method SAGE (Shapley Additive Global importancE) \cite{Covert} in which the features used in the classifier represent the players in a particular game scenario. The games induced by machine learning problems belong to the area of cooperative game theory \cite{MaschlerSolanZamir13,PelegSudholter07}. There are many variations in coalitional games used for such problems. Some researchers employ the ``local'' approach where the influence of features is evaluated with respect to a fixed configuration of the target variable and the features \cite{Strumbelj14}. Moreover, assessing the contribution of features by a value-like concept from game theory seems so natural that it has been discovered independently without a reference to any specific game; see \cite{somol2011fast}.

 The main principles underlying SAGE are to construct a coalitional game assessing the coalitions of features for the given classifier and, subsequently, to compute the Shapley value of individual features. This computation is in fact the approximation based on a sampling algorithm since the underlying game typically has a huge number of players. The Shapley value is thus instrumental in evaluating the predictive power of features. Moreover, the well-known axioms of Shapley value \cite{Shapley53} provide a natural and domain-independent justification for such a distribution of the ``total'' predictive capacity among the individual features. 

The recent paper \cite{AyPolaniVirgo20} combines information- and game-theoretic tools to solve the following information decomposition problem:
\begin{quote}
    How to decompose additively the mutual information between input random variables and the target variable into the sum of nonnegative terms quantifying the information contribution of each set of input variables?
\end{quote}
The information decomposition problem asks for measuring the influence of sets of input variables and not only that of individual variables (features). The need to capture such complex and potentially high-dimensional interactions arises already from very simple probabilistic models where two-dimensional interactions can't describe the behavior of the system faithfully. The folklore example is the \textsc{XOR} model, in which two independent random bits $X_1$ and $X_2$ interact three-way with $Y=X_1\oplus X_2$, but neither $X_1$ nor $X_2$ alone interacts pairwise with $Y$. The solution to the information decomposition problem proposed in \cite{AyPolaniVirgo20} is called Information Attribution. It is worth emphasizing that the passage from measuring contributions of individual variables to quantifying the contributions of sets of such variables is the key difference of Information Attribution from the methods such as above mentioned SAGE \cite{Covert}. On the one hand, this characteristic makes Information Attribution much more expressive. On the other, it is computationally challenging to deal with already moderate-size models using this technique due to the exponentially increasing number of players.

In the rest of this section, first we will state our assumptions (Section \ref{sec:assum}) and then define the information decomposition problem formally (Section \ref{sec:IDoutline}). Since the special game-theoretic framework of Information Attribution was the main motivation for the problems studied in this paper, we briefly sketch the basic principles of Information Attribution in Section \ref{sec:AI}; the full-blown description can be found in \ref{a:IA}. The interested reader is invited to consult the original paper \cite{AyPolaniVirgo20} for more details. Our methodology is explained in Section \ref{sec:our}.

\subsection{The assumptions}\label{sec:assum}
 All random variables are discrete in this paper. We consider $n$ \emph{input} random variables $X_1,\dots,X_n$ each of which has a finite state space $\calX_i$ and a \emph{target} random variable\footnote{Equivalently, the random variable $Y$ might be called the \emph{response variable} or \emph{label}.} $Y$ with a finite state space $\calY$. Let $\calX=\calX_1\times\dots\times\calX_n$. The random vector $(\bX,Y)=(X_1,\dots,X_n,Y)$ on $\calX\times\calY$ is distributed according to the known joint probability distribution $p=p_{\bX Y}$, which is  called the \emph{true distribution} in this context. We will also deal with $n$-dimensional marginal distribution $p_{\bX}$ of $\bX$ and one-dimensional marginal distribution $p_Y$ of $Y$. 

The interaction between $\bX$ and $Y$ can be captured by various information-theoretic quantities; see \cite{Ay18,CoverThomas06}. For example, we can average out the values $\bx\in\calX$ of input random vector $\bX$ which provide some information about the values $y\in\calY$ of target variable $Y$. This leads to the concept of mutual information betweeen $\bX$ and $Y$. Specifically, the \emph{mutual information} $I(\bX;Y)$ is the expected amount of information about $Y$ provided by $\bX$, that is,
\begin{equation}\label{def:JMI}
    I(\bX;Y) = \sum_{(\bx,y)\in\calX\times \calY} p(\bx,y)\cdot \log_2\frac{p(\bx,y)}{p_{\bX}(\bx)\cdot p_Y(y)}.
\end{equation}
These are the basic properties of $I(\bX;Y)$.
\begin{itemize}
    \item $I(\bX;Y)\geq 0$.
    \item $I(\bX;Y)= 0$ if, and only if, $\bX$ and $Y$ are independent.
\end{itemize}
Note that independence of $\bX$ and $Y$ means precisely that $p(\bx,y)=  p_{\bX}(\bx)\cdot p_Y(y)$ for all $(\bx,y)\in\calX\times \calY$.

\subsection{Information decomposition problem}\label{sec:IDoutline}
Our presentation is based on \cite[Section 1]{AyPolaniVirgo20}. Define $V=\{1,\dots,n\}$ and let $A\subseteq V$. A \emph{predictor}\footnote{Note that this concept of predictor is different from the one used in machine learning and statistics, where ``predictor'' is the prediction function for observed data.} is any set of input random variables $\{X_i\mid i\in A\}$. We will often call the set $A$ a predictor, too. Thus the set of all predictors, which is precisely the set of all subsets of $V$, is denoted by $\calP_V$. The information decomposition problem informally introduced above can be stated precisely as follows.
\begin{quote}
    For an arbitrary joint probability distribution $p_{\bX Y}$ of $n$ random inputs $\bX=(X_1,\dots,X_n)$ and target variable $Y$, find real numbers $\psi_A\geq 0$ such that \begin{equation}\label{def:IDinf} I(\bX;Y)=\sum_{A\in\calP_V}\psi_A.\end{equation}
\end{quote}
Additional properties of the decomposition are usually required to obtain a sensible solution; we refer the interested reader to \cite[Section 1]{AyPolaniVirgo20} for the thorough discussion of the motivation and history of the information decomposition problem. In particular, the nonnegative numbers $\psi_A$ are interpreted as ``the proportion of the total mutual information attributed to predictor $A$, beyond what is already provided by its subsets''. It is always assumed that the information attributed to the empty predictor $\emptyset$ is zero, $\psi_{\emptyset}=0$.

\subsection{Information Attribution, briefly}\label{sec:AI}
 The solution to the information decomposition problem based on cooperative game theory was presented in \cite{AyPolaniVirgo20}. The proposed methodology is called \emph{Information Attribution} and these are its main ideas.
 \begin{enumerate}
    \item The class of all predictors $\calP_V$ uniquely generates a certain composite system $\calD(\calP_V,\subseteq)$. Specificaly, $\calD(\calP_V,\subseteq)$ is the lattice of down-sets of predictors with respect to the set-theoretic inclusion $\subseteq$ in $\calP_V$. In other words $\calD(\calP_V,\subseteq)$ contains exactly those sets of predictors which are closed with respect to the inclusion. From the game-theoretic viewpoint, the predictors (elements of $\calP_V$) are interpreted as \emph{players} and the down-sets of predictors (elements of $\calD(\calP_V,\subseteq)$) become \emph{coalitions}. It is worth mentioning that the choice of down-sets as coalitions is very natural here since it is a consequence of the information-theoretic interpretation of the problem.
    \item A \emph{coalitional game} $v_p\colon \calD(\calP_V,\subseteq)\to \R$  is induced by the true distribution $p$ of random vector $(\bX,Y)$ in a particular way. In this context, the worth $v_p(\calS)$ of each coalition $\calS\in\calD(\calP_V,\subseteq)$ is defined to be the information distance of $p$ from a certain baseline distribution depending on $p$ and $\calS$. The number $v_p(\calS)$ can be intepreted as a measure of complexity of the true distribution $p$ as in \cite[Chapter 6.1]{Ay18}.
    \item The coalitional game $v_p$ is used to find the value $\psi_A$ of each predictor $A\in \calP_V$. This is based on the concept of hierachical value of Faigle and Kern \cite{FaigleKern92}. The hierachical values was developed as the extension of Shapley value \cite{Shapley53} to the coalitional games with a partially ordered player set. The partial order captures the hierarchy among the players. The values $\psi_A$ are then declared to be the solution to the information decomposition problem \eqref{def:IDinf}.
 \end{enumerate}

 \subsection{Outline and methodology}\label{sec:our}
In this paper we study the general game-theoretic model originating from  Information Attribution. The first step is to replace the powerset of predictors with any finite boolean algebra of players. This enables us to present the original framework in a lighter notation and to connect our formulation to the known results about coalitional games with restricted coalition formation. In Section \ref{sec:coal} we discuss basic properties of coalitional games where players form the boolean algebra and the coalitions are restricted to the corresponding lattice of down-sets. This also involves the introduction of Harsanyi dividends (M\"obius transform), one of the key tools instrumental in the development of value theory for such games. Our approach to the study of all coalitional games instead of just the ``information games'' induced by true distributions $p_{\bX Y}$ might be seen as too general at the first sight. However, we claim that it is not only beneficial to formulate the original problem in the most general way, but it is also necessary. This is for the following reasons.
\begin{itemize}
    \item The solution of information decomposition problem \eqref{def:IDinf} employs the hierachical value \cite{FaigleKern92}, which is based on the uniform distribution over admissible permutations of players. However, it was also pointed out in \cite{AyPolaniVirgo20} that such a distribution might not be the only sensible choice.
    \item The coalitional games $v_p$ arising from Information Attribution do not have any extra properties except monotonicity.
    \item On top of that, the class of all such games does not form any subset amenable to the methods of cooperative game theory. Specifically, it can be shown the set of coalitional games $v_p$ obtained by varying all probability distributions $p$ of the random vector $(\bX,Y)$ is compact but not convex. Therefore, it is not a linear space.
\end{itemize}

Our main results are in Section \ref{sec:val}. The key observation is that the hierachical value used in Information Attribution is not the only value concept that can be applied to the solution of information decomposition problem. By a \emph{value} we mean an operator mapping a family of coalitional games to the set of possible vector allocations for individual players. Therefore, we focus our study on the class of efficient and positive values for coalitional games where players form a boolean algebra, since the two properties of values are the minimal conditions for the solution of information decomposition problem \eqref{def:IDinf}. It will become clear later in Section \ref{sec:val} that Efficiency means the existence of additive decomposition \eqref{def:IDinf} and Positivity is the nonnegativity of $\psi_A$ in \eqref{def:IDinf}. This will lead naturally to the study of \emph{random-order values} and positive \emph{sharing values}. The former class is based on the average evaluation of player's contribution across the possible coalitions, whereas the latter involves averaging over Harsanyi dividends of the coalitions to which the player belongs.  We provide the characterization of those classes and discuss some examples appearing in the literature.

Section \ref{sec:concl} summarizes our main findings and mentions several items for further research. Appendix contains the description of coalitional games over $2$- and $4$-player boolean algebras (\ref{a:4} and \ref{a:8}, respectively) and the detailed specification of Information Attribution (\ref{a:IA}).

\section{Coalitional Games With the Boolean Algebra of Players}\label{sec:coal}
The \emph{standard model} of a coalitional game is based on the assumption that the player set is trivially ordered --- there is no relation between any pair of different players. This assumption means that a feasible coalition can be any subset of the player set; see \cite{MaschlerSolanZamir13,PelegSudholter07} for the thorough exposition of the standard model of coalitional games. In this section, we introduce coalitional games where the player set is a boolean algebra\footnote{A word of caution is in order here. The boolean algebra plays the role of player set in our setting, whereas the algebraically identical concept of powerset represents the set of all coalitions in the standard model of coalitional games.}, and we will further discuss the concepts from coalitional games related to the Shapley value over partially ordered player sets; see \cite{FaigleKern92,Algaba17,Algaba19}. As for the elements of lattice and order theory used here, we refer the reader to \cite{Stanley2012,Grabisch16} for all the unexplained notions.
\subsection{Players and coalitions}
Let $P$ be a nonempty finite set. An element of $P$ is called a \emph{player} and it is usually denoted by small letters such as $a,b,\dots,i,j,\dots$ Throughout the paper we always assume that $P$ is partially ordered by $\preceq$ such that the pair $(P,\preceq)$ is a boolean algebra. The order $\preceq$ expresses precedence constrains of hierarchy among the players in $P$ \cite{FaigleKern92,Algaba19}. The boolean algebra $P$ has $|P|=2^n$ elements, where $n$ is the number of atoms of $P$. The join and meet in $(P,\preceq)$ are denoted by $\vee$ and $\wedge$, respectively. By the \emph{rank} of an element $i\in P$ we mean the number $\rho(i)$ of uniquely determined distinct atoms $a_1,\dots,a_{\rho(i)}$ such that $i=a_1\vee \dots \vee a_{\rho(i)}$. In particular, the top element $\top$ (the highest player in the hierarchy) has rank $n$, and we also say that the boolean algebra $(P,\preceq)$ has rank $n$. The bottom element $\bot$ (the lowest player in the hierarchy) has rank $0$. It is well-known that
\begin{itemize}
    \item the number of atoms $n$ uniquely determines any finite boolean algebra up to an isomorphism and
    \item each boolean algebra $(P,\preceq)$ of rank $n$ is isomorphic to the powerset of an $n$-element set.
\end{itemize}

The usual concept of permutation of players needs to be generalized so that the resulting permutation is compatible with the order of players in $P$ given by~$\preceq$. This leads to the following concept. A mapping $f\colon P\to \{1,\dots,2^n\}$ is a \emph{linear extension}\footnote{Equivalently, linear extensions are also called admissible permutations.} of $(P,\preceq)$ if $f$ is bijective and order-preserving. Specifically, the latter property says that $i\preceq j$ implies $f(i)\leq f(j)$, for all $i,j\in P$, where $\leq$ denotes the usual total order between natural numbers. We can think of $f$ as the ranking of a player set which respects the structure of superiors. Let $\calL(P)$ be the set of all linear extensions of $(P,\preceq)$. Enumerating linear extensions of $\calL(P)$ is a hard problem -- see \cite{Brightwell03} for non-trivial bounds on $|\calL(P)|$. For example, it is known that $|\calL(P)| \ge 1.5\times 10^6$ for the boolean algebra $P$ of rank $n=4$. We will need the following criterion of equality of $f_1,f_2\in \calL(P)$. Namely $f_1=f_2$ if, and only if, the condition 
\begin{equation}\label{equality_crit}
    \text{If $f_1(i)> f_1(j)$, then $f_2(i)> f_2(j)$, \quad for all $i,j\in P$}
\end{equation}
holds true. The non-trivial implication in the above equivalence is based on this observation. For any $i \in P$ and any $f\in\calL(P)$, the number $f(i)$ is uniquely determined by the cardinality of the set $\{j \in P \mid f(i) > f(j)\}$. Since such sets corresponding to $f_1$ and $f_2$ coincide by the assumption \eqref{equality_crit}, we get $f_1=f_2$.

Let $S\subseteq P$ be a subset of players. An element $i\in S$ is a \emph{maximal element} of $S$ if there is no element $j\in S\setminus i$ with $i\preceq j$. Let $S^*\subseteq S$ be the set of all maximal elements of $S$. Since $P$ is finite, the set $S^*$ is nonempty whenever $S$ is. A~subset $S\subseteq P$ is
\begin{itemize}
    \item a \emph{down-set} if $i\in S$, $j\in P$, and $j\preceq i$ implies $j\in S$,
    \item an \emph{antichain} if $i\not\preceq j$ and $j\not\preceq j$ for all $i,j\in S$.
\end{itemize}
For any $S\subseteq P$, define $\braket{S} =\{j\in P \mid j\preceq i \text{ for some $i\in S$}\}$. Then $\braket{S}$ is a down-set. If $S$~is an antichain, then $S=\langle S\rangle^*$. Conversely, the set $S^*$ of maximal elements of a down-set $S$ is necessarily an antichain and $S=\langle S^*\rangle$. This yields a one-to-one correspondence between antichains and down-sets in $(P,\preceq)$. We will frequently use this observation: If $S\subseteq P$ is a down-set and $i\in S^*$, then $S\setminus \{i\}$ is a down-set, too. 

In order for a \emph{coalition} $S\subseteq P$ to be feasible with respect to the precedence constraints given by $\preceq$, we consider only those $S$ containing all the subordinates of each superior player in $S$. This means exactly that a coalition $S$ must be a down-set in $(P,\preceq)$. In the paper we will use the terms ``down-set'' and ``feasible'' interchangeably. The set of all down-sets $\calD(P,\preceq)$ is a finite distributive lattice in which the join and meet coincide with the set-theoretic operations $\cup$ and $\cap$, respectively. The partial order of $\calD(P,\preceq)$ is the inclusion $\subseteq$ between down-sets. Interestingly enough, $\calD(P,\preceq)$ is very special among the distributive lattices, since it is precisely the free distributive lattice generated by the $n$ atoms of $(P,\preceq)$. The role of this lattice for information theory is further discussed in \cite{AyPolaniVirgo20}.

\begin{convention}
    We will frequently omit the curly braces. For example, suppose $b,e\in P$. Expressions such as $\langle b,e\rangle$ and $P\setminus e$ are understood as $\langle \{b,e\}\rangle$ and $P\setminus \{e\}$, respectively. 
\end{convention}

See Figures \ref{FigBA4} and \ref{FigBA8} for the examples of boolean algebra $P$ with ranks $n=2,3$ with the corresponding $\calD(P,\preceq)$. The rapidly growing cardinalities of $\calD(P,\preceq)$ are  shown in Table \ref{tabDed}. The resulting numbers are the Dedekind numbers. By the definition, each Dedekind number is the number of antichains in the powerset of an $n$-element set.

\begin{table}[h]
    \begin{center}
    \begin{tabular}{|c|| c | c | c | c | c|}
        \hline			
     $n$ & $2$ & $3$ & $4$  & $5$ & $6$\\
    $|(P,\preceq)|$ & $4$ & $8$ & $16$ & $32$ & $64$\\
     $|\calD(P,\preceq)|$ & $6$ & $20$ & $168$ & $7\,581$ & $7\,828\,354$\\
        \hline  
      \end{tabular} 
      \caption{The cardinalities of player set and coalition set for different ranks $n$}
      \label{tabDed}
    \end{center}
\end{table}

\subsection{Coalitional games}\label{sec:coalprop}
A coalitional game assigns to each feasible coalition the amount of utility as the result of cooperation among the members of the coalition. We will write simply $\calD$ in place of $\calD(P,\preceq)$. A \emph{(coalitional) game} on~$\calD$ is a function $v\colon \calD\to \R$ such that $v(\emptyset)=0$. The function $v$ associates with each feasible coalition $S\in\calD$ its worth $v(S)\in\R$. We adopt the standard concepts of coalitional game theory in the setting of games with restricted cooperation; see \cite{Grabisch16} for a survey and \cite{GrabischKroupa19} for the setting of partially ordered player set in particular. A game $v$ is called
\begin{itemize}
    \item \emph{monotone} if $S\subseteq T$ implies $v(S)\leq v(T)$, 
    \item \emph{nonnegative} when $v(S)\geq 0$,
    \item \emph{supermodular} if $v(S)+v(T) \leq v(S\cup T) + v(S\cap T)$,
    \item \emph{submodular} if $v(S)+v(T) \geq v(S\cup T) + v(S\cap T)$,
\end{itemize}
for all $S,T\in\calD$. A player $i\in P$ is \emph{null}\footnote{Note that a null player in this sense is called a ``dummy player'' in \cite[Example 3]{FaigleKern92}. However, dummy players are usually defined by a weaker condition in the game-theoretic literature.} in game $v$ if $v(S)=v(S\cup i)$, for all $S\in \calD$ such that $S\cup i\in\calD$. A \emph{carrier} for a game $v$ is a coalition $U\in\calD$ satisfying $v(S)=v(S\cap U)$ for all $S\in\calD$.

It is clear that the set $\sfG(\calD)$ of all games on $\calD$ is a real linear space. We will use the shorter notation $\sfG=\sfG(\calD)$ whenever $\calD$ is understood. The linear space $\sfG$ is spanned by the basis of \emph{unanimity games} $u_T$, where $\emptyset\neq T\in\calD$ and
$$
u_T(S) = \begin{cases} 1 & T\subseteq S,\\
    0 & \text{otherwise,} 
\end{cases} \qquad S\in \calD.
$$
Observe that each player $i\in P\setminus T$ is null in game $u_T$. 

The \emph{Harsanyi dividends} of a game $v\in\sfG$ are recursively defined numbers
$$
\hat{v}(S) = \begin{cases}
    0 & S = \emptyset,\\
    v(S) - \sum\limits_{\substack{T \in \mathcal{D} \\ T\subset S}} \hat{v}(T) &  S\in \calD\setminus \{\emptyset\}.
\end{cases}
$$
Equivalently, 
$$
\hat{v}(S) = \sum_{T\in \Omega(S)} (-1)^{|S|-|T|} v(T), \qquad S\in\calD,
$$
where $\Omega(S)$ is the family of all $T\in \calD$ such that $T\subseteq S$ and the order interval $\{R\in \calD \mid T\subseteq R \subseteq S\}$ is a boolean sublattice of $\calD$. The function $\hat{v}$ is a game on $\calD$ and it is also called the \emph{M\"obius transform} of $v$; see \cite{Grabisch16,Stanley2012}. The Harsanyi dividends of $v$ are the coordinates of $v$ with respect to the basis of $\sfG$ formed by the unanimity games:
\begin{equation}\label{game_linear}
    v = \sum_{\emptyset \neq T\in\calD} \hat{v}(T)\cdot u_T.
\end{equation}
Expanding the formula \eqref{game_linear} coordinatewise,
\begin{equation} \label{game_linear_coord}
    v(S) = \sum_{\substack{T\in\calD \\ T\subseteq S}} \hat{v}(T), \qquad S\in\calD. 
\end{equation}
In particular, the Harsanyi dividends of a unanimity game $u_T$ are
\begin{equation}\label{Harsanyi_unanimity}
    \hat{u}_T(S) = \begin{cases}
        1 & S=T,\\
        0 & \text{otherwise,}
       \end{cases}
       \qquad S \in \calD.
\end{equation}
We list the Harsanyi dividends for the $4$-player and $8$-player game in \ref{a:4} and \ref{a:8}, respectively.

Every null player maximal in a coalition $S$ nullifies the dividend of $S$.

\begin{proposition}\label{harsanyinull}
Let $i \in P$ be a null player in a game $v$ and $S\in\calD$. If $i \in S^*$, that is, $i$ is a maximal element in $S$, then $\hat{v}(S) = 0$. 
\end{proposition}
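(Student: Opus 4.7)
The plan is to prove this by induction on the cardinality $|S|$, using the recursive definition of the Harsanyi dividend together with the fact that removing a maximal element from a down-set yields a down-set.

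First I would record the key structural observation: because $i \in S^*$ is maximal in $S$ and $S \in \calD$, the subset $S \setminus i$ is itself a down-set. This is important for two reasons. First, nullity of $i$ in $v$ then immediately gives $v(S) = v(S \setminus i)$. Second, every $T \in \calD$ with $T \subsetneq S$ and $i \notin T$ automatically satisfies $T \subseteq S \setminus i$, and conversely; so the down-sets inside $S$ not containing $i$ are exactly the down-sets inside $S \setminus i$.

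Next I would combine these facts with the recursive definition
\[
\hat{v}(S) = v(S) - \sum_{\substack{T \in \calD \\ T \subsetneq S}} \hat{v}(T)
\]
and split the sum according to whether $i \in T$. By the preceding observation together with formula~\eqref{game_linear_coord}, the sum over $T$ with $i \notin T$ equals $\sum_{T \subseteq S \setminus i,\, T \in \calD} \hat{v}(T) = v(S \setminus i) = v(S)$. This cancels the $v(S)$ term, leaving
\[
\hat{v}(S) = -\sum_{\substack{T \in \calD,\; T \subsetneq S \\ i \in T}} \hat{v}(T).
\]
Now for each such $T$, since $i$ is maximal in the larger set $S$ it is also maximal in $T$, i.e.\ $i \in T^*$. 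As $|T| < |S|$, the inductive hypothesis gives $\hat{v}(T) = 0$, hence $\hat{v}(S) = 0$.

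For the base case I would take $S = \langle i \rangle$, which is the unique smallest down-set in which $i$ is maximal. Any $T \in \calD$ with $i \in T \subseteq \langle i \rangle$ must contain all elements below $i$ and so equals $\langle i \rangle$ itself; thus the index set of the remaining sum is empty and $\hat{v}(\langle i \rangle) = 0$ trivially. The only real subtlety is the bookkeeping verifying that the two sums in the cancellation step index the same family of down-sets; all other steps are a direct application of the definitions and of the nullity hypothesis, so I do not foresee a substantive obstacle.
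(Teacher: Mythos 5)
Your proof is correct and takes essentially the same route as the paper's: induction on $|S|$, splitting the recursive sum over $T \subsetneq S$ by whether $i \in T$, using $i \in T \Leftrightarrow i \in T^*$ for such $T$, collapsing the $i \notin T$ part to $v(S\setminus i)$ via \eqref{game_linear_coord}, and invoking nullity. The only difference is cosmetic: you cancel $v(S)$ against $v(S\setminus i)$ before applying the induction hypothesis, whereas the paper does it in the reverse order.
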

\begin{proof}
We will proceed by induction on the cardinality of $S$. First, assume that $|S| = |\langle i\rangle|$. In this case $S = \langle i\rangle$, and $\hat{v}(S) = v(S) - v(S \setminus i) = 0$ by the hypothesis. 
Assume that this is true for down-sets $S$ of cardinality up to $k$ and consider a down-set $S$ of cardinality $k + 1$ such that $i \in S^*$. Note that the cardinality of such a down-set is bounded by $|P \setminus (\cup_{a \in P} (i \vee a))|$, since the latter is the largest down-set (with respect to $\subseteq$) in which $i$ is maximal. So we shall assume $k \leq |P \setminus (\cup_{a \in P} (i \vee a))| - 1$. Then 
\begin{align*}
    \hat{v}(S) & = v(S) - \sum_{T \subset S} \hat{v}(T) = v(S) - \sum_{\substack{T \subset S \\ i \in T}} \hat{v}(T) - \sum_{\substack{T \subset S \\ i \notin T}} \hat{v}(T) \\
    & =
    v(S) - \sum_{\substack{T \subset S \\ i \in T^*}} \hat{v}(T) - \sum_{\substack{T \subset S \\ i \notin T}} \hat{v}(T).
\end{align*}
The last equality follows from the equivalence $i \in T \Leftrightarrow i \in T^*$ for any $T\subset S$.
By the induction hypothesis, $$\sum_{\substack{T \subset S \\ i \in T^*}} \hat{v}(T) = 0.$$ Since
\[
\sum_{\substack{T \subset S \\ i \notin T}} \hat{v}(T) = \sum_{T \subseteq S \setminus i} \hat{v}(T) = v(S \setminus i),
\]
the conclusion $\hat{v}(S) = 0$ follows.
\end{proof}

\begin{remark}\label{decomposition}
    We will comment on the iterative reasoning in the proof of Proposition \ref{harsanyinull}. For any $i \in S^*$ not necessarily null in $v$, we can expand $\hat{v}(S)$ as

\[
(v(S) - v(S \setminus i)) - \sum_{\substack{T \subset S \\ i \in T^*}} (v(T) - v(T \setminus i)) + \sum_{\substack{T \subset S \\ i \in T^*}} \sum_{\substack{U \subset T \\ i \in U^*}} (v(U) - v(U \setminus i)) - \dots 
\]
Therefore, we can decompose $\hat{v}(S)$ as the sum of marginal contributions 

\[
 \sum_{\substack{T \subseteq S \\ i \in T^*}} \beta_i(T) \cdot (v(T) - v(T \setminus i)),
\]
where each $\beta_i(T)\in\R$ depends only on the down-set $T$ and the element $i$. Note that $\beta_i(S) = 1$, for every $i \in S^*$, and $\beta_i(T) = -1$, for each $T$ with $|S|-|T| = 1$. 
\end{remark}

\section{Values}\label{sec:val}
First, we recall the notation introduced in the previous section. By $P$ we denote the boolean algebra of players, $\calD$ is the corresponding lattice of down-sets (feasible coalitions), and $\sfG$ is the real linear space of all coalitional games $v\colon \calD\to\R$. A \emph{value} on $\sfG$ is a mapping $$\phi\colon \sfG \to \R^{P}.$$ For every game $v\in\sfG$, the coordinates of vector $$\phi(v)=(\phi_i(v))_{i\in P}\in\R^{P}$$ are allocations of coalitional worth to the players $i\in P$. Any value is a possible solution concept for coalitional games. We single out usual axioms of values, which reflect both basic principles of economic rationality and mathematically convenient properties.
\begin{description}
    \item[Efficiency] $\sum\limits_{i\in P}\phi_i(v)=v(P)$, for every $v\in\sfG$. 
    \item[Positivity] $\phi_i(v)\geq 0$, for every monotone game $v\in\sfG$ and each $i\in P$.  
    \item[Carrier axiom] If $U$ is a carrier for $v\in\sfG$, then $\sum\limits_{i\in U}\phi_i(v)=v(U)$.
    \item[Null player axiom] $\phi_i(v)=0$, for any $v\in\sfG$ and each null player $i\in P$.
    \item[Symmetry] $\phi_i(v)=\phi_{\sigma(i)}(\sigma v)$ for all $v\in \sfG$, every player $i\in P$, and any boolean automorphism $\sigma\colon P\to P$, where $\sigma v(S)=v(\sigma^{-1}(S))$, $S\in\calD$.
    \item[Linearity] $\phi(\alpha v+\beta w)=\alpha\phi(v)+\beta\phi(w)$, for every $v,w\in\sfG$ and all $\alpha,\beta\in\R$. 
\end{description}

Carrier axiom is equivalent to Efficiency with Null player axiom in the standard model of coalitional games. By contrast, in our setting it is only true that Efficiency and Null player axiom imply Carrier axiom, but not conversely. Positivity is also called Monotonicity in the game-theoretic literature; cf. \cite{Weber88,Derks00}.

\subsection{Random-order values}
First, we consider the class of values obtained by averaging out the marginal vectors with respect to a probability distribution over linear extensions. This is analogous to the class of random-order values studied in the standard model of coalitional games \cite{Weber88}.

For each player $i\in P$ and every linear extension $f\in\calL(P)$, we consider the set of all players preceding player $i$ in the ranking~$f$,
$$
S_f(i)=\{j\in P\mid f(j)\leq f(i)\}.
$$
Clearly, the player $i$ is a maximal element in $S_f(i)$. Observe that $S_f(i)$ is down-set and so is the set $S_f(i)\setminus i=\{j\in P\mid f(j) < f(i)\}$.
\begin{example}
    Let $P$ be the boolean algebra of rank $3$ and the linear extension $f\in\calL(P)$ be given in Figure \ref{fig:exLE}. Then $S_f(e)=\{\bot,a,b,c,d,e\}$.
\begin{figure}[h]
    \begin{center}
\begin{tikzcd}[tips=false,column sep=1em,row sep=1.5em]
    & \top/8 \ar{dl} \ar{d} \ar{dr} & \\
    d/4 \ar{d} \ar{dr} & e/6 \ar{dl} \ar{dr} & f/7 \ar{dl} \ar{d} \\
    a/2\ar{dr} & b/3\ar{d} & c/5\ar{dl} \\
    & \bot/1 \\
\end{tikzcd}
\end{center}
\label{fig:exLE}
\caption{An example of linear extension}
\end{figure}
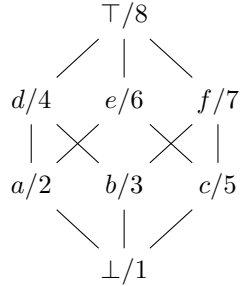
\end{example}

 The \emph{marginal contribution} of player $i\in P$ in game $v\in\sfG$ with respect to a linear extension $f\in\calL(P)$ is 
\begin{equation}\label{def:MC}
    \Delta_i^f(v) = v(S_f(i))-v(S_f(i)\setminus i).
\end{equation}
The vector $$\Delta^f(v)=(\Delta_i^f(v))_{i\in P}\in\R^P$$ is called the \emph{marginal vector}. The marginal contribution of player $i$ is the sum of Harsanyi dividends of coalitions $S\in\calD$ in which $i$ is maximal and whose players 
precede $i$ in the ranking~$f$:
\begin{equation}\label{marg_Hars}
    \Delta_i^f(v) = \sum_{\substack{S\subseteq S_f(i)\\ i\in S^*}} \hat{v}(S).
\end{equation}
The identity \eqref{marg_Hars} follows from \eqref{game_linear_coord} and from the equivalence $i\in S \Leftrightarrow i\in S^*$ valid for every down-set $S\subseteq S_f(i)$.

\begin{definition}\label{def:RO}
    Let $r$ be a probability distribution over linear extensions $\calL(P)$. The \emph{random-order value} is the value $\phi^r$ such that
    \begin{equation}\label{eq:RO}
    \phi^r_i(v) = \sum_{f\in\calL(P)} r(f)\cdot  \Delta_i^f(v), \qquad i\in P,\; v\in\sfG.
    \end{equation}
\end{definition}
Every random-order value satisfies Linearity, Efficiency, Null player axiom, and Positivity. If $r$ is the uniform distribution, $r(f)=\frac{1}{|\calL(P)|}$ for all $f\in\calL(P)$, then $\phi^r$ coincides with the hierachical value of Faigle and Kern \cite[Theorem 2]{FaigleKern92}; see Section \ref{sec:hie} for further details.

\subsection{Selectors}
We will show how random-order values relate to the family of so-called sharing values. To this end, we need the concept of selector, which was used in \cite{Derks00} to study the selectope solution of standard coalitional games. In our setting, a selector picks a maximal element from any down-set. Specifically, a \emph{selector} is a mapping $$\alpha\colon \calD\setminus \{\emptyset\}\to P$$ such that $\alpha(S)\in S^*$ for any nonempty $S\in\calD$.  Let $\calS(\calD)$ be the set of all selectors. For any  $\alpha\in\calS(\calD)$, the \emph{selector value} $\delta^{\alpha}$ is a value which gives player $i$ the sum of Harsanyi dividends of the coalitions in which $i$ is maximal according to $\alpha$:
\begin{equation}\label{def:selectorvalue}
    \delta^{\alpha}_i(v) = \sum_{\substack{S\in\calD \\ i=\alpha(S)}} \hat{v}(S), \qquad i\in P, \;v\in\sfG.
\end{equation}
 We say that a selector $\alpha\in\calS(\calD)$ is \emph{consistent} if $\alpha(S)=\alpha(T)$, for all nonempty $S,T\in\calD$ such that $S\subseteq T$ and $\alpha(T)\in S^*$. Let $\calC(\calD)$ be the set of all consistent selectors. Any selector $\alpha\in\calS(\calD)\setminus\calC(\calD)$ is called \emph{inconsistent}. We will demonstrate that consistent selectors and linear extensions are in bijection.
\begin{lemma}\label{lem:Amap}
    For every linear extension $f\in \calL(P)$, define $$\alpha_f(S) = \argmax\limits_{i\in S^*} f(i), \qquad \emptyset\neq S\in\calD.$$ Then $\alpha_f\in \calC(\calD)$ and $\delta^{\alpha_f}(v)=\Delta^f(v)$. The map sending $f\in\calL(P)$ to $\alpha_f\in \calC(\calD)$ is injective.
\end{lemma}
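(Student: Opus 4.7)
The plan is to prove the three assertions of the lemma separately: (i) $\alpha_f \in \calC(\calD)$, (ii) $\delta^{\alpha_f}(v) = \Delta^f(v)$, and (iii) the map $f \mapsto \alpha_f$ is injective. Throughout, the key lever is that a linear extension $f$ is simultaneously an order-preserving bijection and (by bijectivity) an \emph{injective} function on $P$. The same elementary observation powers all three parts: if $j \in S \subseteq T$ for down-sets $S, T$, then some $j' \in S^*$ (respectively $T^*$) dominates $j$, so order-preservation passes bounds on $f$ through $j'$.

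For (i), I first note that $\alpha_f$ is well-defined and a selector, since every nonempty $S \in \calD$ has a nonempty antichain $S^*$ and injectivity of $f$ makes $\argmax_{i \in S^*} f(i)$ a singleton. To verify consistency, I assume $S \subseteq T$ are nonempty down-sets with $i := \alpha_f(T) \in S^*$, and take any $j \in S^* \setminus \{i\}$. Since $j \in S \subseteq T$, some $j' \in T^*$ satisfies $j \preceq j'$, and order-preservation gives $f(j) \leq f(j') \leq f(i)$, the latter by definition of $i$. As $j \neq i$ and $f$ is injective, the inequality is strict, so $i = \alpha_f(S) = \alpha_f(T)$.

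For (ii), I plan to rewrite the marginal contribution via identity \eqref{marg_Hars} and show the two index sets coincide:
\[
\{S \in \calD \mid \alpha_f(S) = i\} \;=\; \{S \in \calD \mid i \in S^*,\ S \subseteq S_f(i)\}.
\]
The forward inclusion: $\alpha_f(S) = i$ gives $i \in S^*$, and for any $j \in S$, picking $j' \in S^*$ with $j \preceq j'$ yields $f(j) \leq f(j') \leq f(i)$, so $j \in S_f(i)$. The reverse inclusion: if $S \subseteq S_f(i)$ and $i \in S^*$, then each $j \in S^*$ satisfies $f(j) \leq f(i)$, strict whenever $j \neq i$ by injectivity, hence $i = \alpha_f(S)$. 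Summing $\hat{v}$ over the common index set yields $\delta^{\alpha_f}_i(v) = \Delta^f_i(v)$ coordinatewise.

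For (iii), I would invoke the equality criterion \eqref{equality_crit}. Given $\alpha_{f_1} = \alpha_{f_2}$, I must show $f_1(i) > f_1(j)$ implies $f_2(i) > f_2(j)$. The natural witness is the down-set $S := S_{f_1}(i)$, which by construction satisfies $\alpha_{f_1}(S) = i$ and contains $j$. By hypothesis $\alpha_{f_2}(S) = i$, so $i \in S^*$ and $f_2(i) = \max_{k \in S^*} f_2(k)$. Picking $j' \in S^*$ with $j \preceq j'$ and repeating the consistency argument gives $f_2(j) \leq f_2(j') \leq f_2(i)$, strict since $j \neq i$. I expect the only mild obstacle to be identifying the witness down-set for the criterion in (iii); once $S_{f_1}(i)$ is chosen, the argument runs in parallel with (i) and (ii).
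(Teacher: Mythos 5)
Your proof is correct and follows essentially the same route as the paper: the consistency of $\alpha_f$ and the identity $\delta^{\alpha_f}(v)=\Delta^f(v)$ are obtained exactly as in the paper, via \eqref{marg_Hars} and the equivalence $i=\alpha_f(S)\Leftrightarrow\bigl(i\in S^*,\ S\subseteq S_f(i)\bigr)$, with your extra detail of passing each $j\in S$ through a maximal element $j'$ being a harmless elaboration. The only deviation is the injectivity step, where you verify criterion \eqref{equality_crit} directly (assuming $\alpha_{f_1}=\alpha_{f_2}$) with witness down-set $S_{f_1}(i)$, whereas the paper argues contrapositively, extracting an incomparable pair $i,j$ with $f_1(i)>f_1(j)$ and $f_2(i)<f_2(j)$ and separating $\alpha_{f_1}$ from $\alpha_{f_2}$ on the down-set $\langle i,j\rangle$; both arguments are valid and of comparable length, yours avoiding the incomparability observation at the cost of a larger witness coalition.
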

\begin{proof}
    By the definition of $\alpha_f$, we obtain $\alpha_f(S)\in S^*$ for any nonempty $S\in\calD$. If $S\subseteq T$ and $\alpha_f(T)\in S^*$, then $\alpha_f(T)$ is the maximizer of $f$ over $S^*$. Therefore, $\alpha_f(S)=\alpha_f(T)$, and $\alpha_f\in \calC(\calD)$. 

    We prove the identity $\delta^{\alpha_f}_i(v)=\Delta^f_i(v)$ for each $i\in P$. This is true by \eqref{marg_Hars} and by the equivalence of these three conditions for any $S\in\calD$:
    $$
    i=\alpha_f(S) \qquad \Leftrightarrow \qquad f(j)\leq f(i), \;i\in S^*,\; \forall j\in S \quad \Leftrightarrow \quad S\subseteq S_f(i),\; i\in S^*.
    $$
    Suppose $f_1,f_2\in\calL(P)$ and $f_1\neq f_2$. Then, by \eqref{equality_crit}, there exist necessarily imcomparable elements $i,j\in P$ such that $f_1(i)>f_1(j)$ and $f_2(i) < f_2(j)$. Define $S$ to be the down-set whose set of maximal elements is precisely $\{i,j\}$. Then $\alpha_{f_1}(S)=i\neq j=\alpha_{f_2}(S)$.
\end{proof}

Let $\alpha$ be a consistent selector. We define a function $g_\alpha\colon \{1,\dots,|P|\}\to P$ recursively. Let $g_\alpha(|P|)=\alpha(P)=\top$ and 
\begin{equation*}
    g_\alpha(\ell)=\alpha (P\setminus \{g_\alpha(|P|),\dots,g_\alpha(\ell + 1)\}), \qquad \ell=1,\dots, |P|-1.
\end{equation*}
By the construction, $g_\alpha$ is bijective. 
The bijection $g_\alpha$ selects and removes the elements from $P$ (starting from the top element $\top$) in such a way that the removed element is maximal in the set from which it is removed, and thus it generates a descending chain of down-sets. Define 
\begin{equation}\label{def:map}
    f_\alpha=g_\alpha^{-1}.
\end{equation}

\begin{lemma}
    Let $\alpha\in\calC(\calD)$. Then $f_\alpha\in\calL(P)$.
\end{lemma}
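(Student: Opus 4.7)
Since the paper already notes that $g_\alpha$ is bijective by construction, so is $f_\alpha = g_\alpha^{-1}$. What remains is to show that $f_\alpha$ is order-preserving, i.e., $i \preceq j$ implies $f_\alpha(i) \leq f_\alpha(j)$. The plan is to work in the dual formulation: writing $i = g_\alpha(k)$ and $j = g_\alpha(m)$, I must show that $i \preceq j$ forces $k \leq m$. This will be proved by contradiction, leveraging the fact that at each stage of the recursion the element being removed is maximal in the set from which it is removed.

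The first step is to verify that the recursion defining $g_\alpha$ is well-posed. Set $D_\ell := P \setminus \{g_\alpha(|P|),\dots,g_\alpha(\ell+1)\}$, with the convention $D_{|P|}=P$. I would prove by downward induction on $\ell$ that each $D_\ell$ is a nonempty down-set of cardinality $\ell$, and that $g_\alpha(\ell) = \alpha(D_\ell) \in D_\ell^*$. The base case is immediate. For the inductive step, the selector axiom gives $g_\alpha(\ell+1) \in D_{\ell+1}^*$, so $D_\ell = D_{\ell+1}\setminus g_\alpha(\ell+1)$ is obtained by deleting a maximal element of a down-set, which—as noted earlier in Section~\ref{sec:coal}—is again a down-set. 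This guarantees that $\alpha(D_\ell)$ is defined and yields a maximal element of $D_\ell$.

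For the order-preserving property, take $i,j\in P$ with $i \preceq j$; the case $i=j$ is trivial, so assume $i\neq j$. Let $k=f_\alpha(i)$ and $m=f_\alpha(j)$, so that $g_\alpha(k)=i$ and $g_\alpha(m)=j$, with $k\neq m$ by injectivity. Suppose toward contradiction that $k > m$. Since $D_\ell$ is obtained from $D_{\ell+1}$ by removing one element, the sets $D_\ell$ are nested and $D_m \subseteq D_k$. In particular, $j = g_\alpha(m) \in D_m \subseteq D_k$. But by the first step, $i = g_\alpha(k)$ is maximal in $D_k$, while $j \in D_k$ with $i \prec j$ and $i \neq j$ — contradicting maximality. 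Hence $k < m$, that is, $f_\alpha(i) < f_\alpha(j)$.

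The main work is therefore the bookkeeping in the inductive step, and the whole argument hinges on the single observation that removing a maximal element preserves the down-set property. Notably, consistency of $\alpha$ is not used in this lemma; it will enter in subsequent statements (e.g., to show that the assignment $\alpha \mapsto f_\alpha$ is the inverse of $f \mapsto \alpha_f$ from Lemma~\ref{lem:Amap}, thereby establishing a bijection between $\calC(\calD)$ and $\calL(P)$).
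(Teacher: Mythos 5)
Your proof is correct and takes essentially the same route as the paper: bijectivity is immediate from the construction, and order-preservation follows from the identical contradiction argument --- if $i\preceq j$ with $f_\alpha(i)>f_\alpha(j)$, then $j$ lies in the nested down-set from which $i=\alpha(\cdot)$ is selected, contradicting maximality of $i$ there. Your extra bookkeeping (the downward induction showing each $D_\ell$ is a down-set, so the recursion is well-posed) and your remark that consistency of $\alpha$ is not actually used in this lemma are both accurate; the paper leaves the former implicit, relying on the earlier observation that deleting a maximal element of a down-set yields a down-set.
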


\begin{proof}
    Clearly, $f_\alpha$ given by \eqref{def:map} is bijective. We need to prove that $f_\alpha$ is order-preserving. Suppose that $i\preceq j$. There are $\ell,\ell' \in \{1,\dots,|P|\}$ such that $i=g_\alpha(\ell)$ and $j=g_\alpha(\ell')$. Put
   \[
        S_\ell  = P\setminus \{g_\alpha(|P|),\dots,g_\alpha(\ell + 1)\}\quad \text{and}\quad
        S_{\ell'} = P\setminus \{g_\alpha(|P|),\dots,g_\alpha(\ell' + 1)\}.
    \] 
    Then $g_\alpha(\ell)=\alpha (S_\ell) \preceq \alpha (S_{\ell'})=g_\alpha(\ell')$.
    Assume that $\ell > \ell'$. Then $S_{\ell'} \subset S_{\ell}$. However, $\alpha(S_\ell) \preceq \alpha(S_{\ell'})$, which contradicts the maximality of $\alpha(S_\ell)$ in $S_\ell$. Therefore, we  obtain $\ell \leq \ell'$, so $f_\alpha\in \calL(P)$.
\end{proof}

The next lemma shows that there is a one-to-one correspondence between linear extensions and consistent selectors. 
\begin{lemma}\label{lem:MI}
The maps $A\colon \calL(P) \to \calC(\calD)$ and $F\colon\calC(\calD) \to \calL(P)$ given by $$A(f)=\alpha_f\qquad \text{and} \qquad F(\alpha)=f_\alpha,$$ respectively, are mutually inverse.
\end{lemma}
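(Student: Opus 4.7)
The plan is to show that both compositions $F\circ A$ and $A\circ F$ are identity maps. Since Lemma \ref{lem:Amap} already establishes that $A$ is well-defined and injective, and $F$ is well-defined by the preceding lemma, proving either composition is the identity combined with the finiteness of both sets would suffice, but I would verify both for clarity.

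For the direction $F\circ A = \id_{\calL(P)}$, I would fix $f\in\calL(P)$ and show by downward induction on $\ell\in\{1,\dots,|P|\}$ that $g_{\alpha_f}(\ell)=f^{-1}(\ell)$, whence $f_{\alpha_f}=g_{\alpha_f}^{-1}=f$. For the base case $\ell=|P|$, the top element $\top$ is the unique maximum of $P$, so $\alpha_f(P)=\top=f^{-1}(|P|)$ because $f$ is order-preserving and bijective. For the inductive step, assuming the claim for all $\ell'>\ell$, the set $T_\ell:=P\setminus\{g_{\alpha_f}(|P|),\dots,g_{\alpha_f}(\ell+1)\}$ equals $f^{-1}(\{1,\dots,\ell\})$. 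The element $f^{-1}(\ell)$ has the largest $f$-value inside $T_\ell$, and one checks that it must be maximal in $T_\ell$: any strict successor would have a strictly larger $f$-value, which is impossible inside $T_\ell$. Hence $\alpha_f(T_\ell)=f^{-1}(\ell)$ by the definition of $\alpha_f$.

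For the direction $A\circ F=\id_{\calC(\calD)}$, fix $\alpha\in\calC(\calD)$ and a nonempty down-set $S\in\calD$; I need to show $\alpha_{f_\alpha}(S)=\alpha(S)$. Writing $T_\ell$ for the chain produced by $g_\alpha$, note that a player $j$ lies in $T_\ell$ iff $\ell\geq f_\alpha(j)$, so $T_\ell\supseteq S$ iff $\ell\geq\max_{j\in S}f_\alpha(j)$. Let $\ell^*:=\max_{j\in S}f_\alpha(j)$, the smallest $\ell$ for which $T_\ell$ still contains $S$. The element $g_\alpha(\ell^*)=\alpha(T_{\ell^*})$ is maximal in $T_{\ell^*}$ by definition of the selector, hence also maximal in the subset $S\subseteq T_{\ell^*}$; this is exactly where the inclusion $\alpha(T_{\ell^*})\in S^*$ required by the consistency axiom holds. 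Consistency then gives $\alpha(S)=\alpha(T_{\ell^*})=g_\alpha(\ell^*)$. On the other hand, $g_\alpha(\ell^*)$ uniquely attains the maximum of $f_\alpha$ on $S$ (since $f_\alpha$ is a bijection), so it certainly attains it on $S^*\subseteq S$, giving $\alpha_{f_\alpha}(S)=g_\alpha(\ell^*)=\alpha(S)$.

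The main obstacle is the second direction, and specifically the clean identification of $\alpha(S)$ with $g_\alpha(\ell^*)$: the temptation is to trace through the recursive construction of $g_\alpha$ explicitly, but this becomes unwieldy. The key insight is to replace the process-based description by the static characterization $\ell^*=\max_{j\in S}f_\alpha(j)$ and then invoke consistency a single time. Everything else reduces to bookkeeping about the chain $T_\ell$ and the equivalence "$j\in T_\ell\iff \ell\geq f_\alpha(j)$", which is immediate from the fact that $g_\alpha$ is a bijection listing the removed elements in order.
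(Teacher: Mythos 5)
Your proof is correct and, for the essential direction $A\circ F=\id_{\calC(\calD)}$, it is essentially the paper's own argument: your $\ell^*=\max_{j\in S}f_\alpha(j)$ is exactly the paper's index $l_S$, the set $T_{\ell^*}$ is the paper's $K$, and consistency is invoked once on the inclusion $S\subseteq T_{\ell^*}$ with $\alpha(T_{\ell^*})\in S^*$. The only difference is that the paper dispenses with the other composition by citing injectivity of $A$ from Lemma~\ref{lem:Amap}, whereas you verify $F\circ A=\id_{\calL(P)}$ directly by induction; note that your aside that one composition plus finiteness would suffice is not quite accurate ($F\circ A=\id$ alone only yields injectivity of $A$ and surjectivity of $F$), but this is moot since you prove both compositions.
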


\begin{proof}
 The map $A$ is injective by Lemma~\ref{lem:Amap}, so we only need to prove that $A \circ F$  is the identity  on $\calC(\calD)$. 
Let $\alpha \in \mathcal{C}(\mathcal{D})$. We shall prove that $\alpha_{f_{\alpha}} =A(F(\alpha))= \alpha$. 
Let $\emptyset\ne S \in \mathcal{D}$ and assume that $\alpha_{f_{\alpha}}(S) = s$. This assumption is equivalent to any of the three conditions:
$$
\argmax\limits_{i \in S^*} f_{\alpha}(i) = s\quad \Leftrightarrow\quad f_{\alpha}(s) \geq f_{\alpha}(i),\;\forall i \in S\quad \Leftrightarrow\quad g_{\alpha}^{-1} (i) \leq g_{\alpha}^{- 1}(s), \;\forall i \in S.
$$
Let $l_S$ be the maximal integer from $\{1, \dots, |P|\}$ such that $g_{\alpha}(l_S) \in S$ and let $K = P \setminus \{g_{\alpha}(|P|), g_{\alpha}(|P| - 1), \dots, g_{\alpha}(l_S +1)\}$. This implies that $K$ is the maximal down-set determined by $g_{\alpha}$ such that $\alpha(K) \in S$. Assume that $\alpha(K) = s' \neq s$. By maximality of $l_S$, we obtain $g_{\alpha}^{-1}(s') = l_S \geq g_{\alpha}^{-1}(i)$ for all $i\in S$. In particular, we have $g_{\alpha}^{-1}(s') > g_{\alpha}^{-1}(s)$, which contradicts the hypothesis. Since $S \subseteq K$ and $\alpha$ is consistent, we deduce that $\alpha(S)=s$. Assume now that $\alpha(S) = s$ and let $l_S$ and $K$ be as above. By the consistency of $\alpha$, we have that $\alpha(K) = s$ and $g_{\alpha}^{-1}(s) = l_S$. Since $l_S$ is maximal such that $g_{\alpha}(l_S) \in S$, we have $g_{\alpha}^{-1}(i) \leq g_{\alpha}^{-1}(s)$ for all $i \in S$.
\end{proof}

Any consistent selector value \eqref{def:selectorvalue} is equal to the marginal vector associated with the ranking \eqref{def:map}.
\begin{lemma}\label{lem:consel}
For every consistent selector $\alpha\in\calC(\calD)$ and all $v\in\sfG$,
        \begin{equation}\label{eq:id}
            \delta^{\alpha}(v)=\Delta^{f_\alpha}(v).
        \end{equation}
\end{lemma}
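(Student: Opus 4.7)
The plan is to observe that this lemma follows essentially for free from the two preceding lemmas, so the main work is to recognize the correct composition rather than to redo any combinatorial analysis. First I would recall that Lemma~\ref{lem:Amap} establishes two facts about the map $A$: for every $f\in\calL(P)$, the selector $\alpha_f$ lies in $\calC(\calD)$, and moreover $\delta^{\alpha_f}(v)=\Delta^f(v)$ holds for every $v\in\sfG$. The key identity $\delta^{\alpha_f}(v)=\Delta^f(v)$ is driven by the equivalence $i=\alpha_f(S)\Leftrightarrow S\subseteq S_f(i),\;i\in S^*$ combined with formula \eqref{marg_Hars} expressing $\Delta_i^f(v)$ as a sum of Harsanyi dividends.

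Next I would apply Lemma~\ref{lem:Amap} to the specific linear extension $f=f_\alpha$ produced by the map $F$ of Lemma~\ref{lem:MI}. This immediately gives
\begin{equation*}
\delta^{\alpha_{f_\alpha}}(v)=\Delta^{f_\alpha}(v)\qquad\text{for every } v\in\sfG.
\end{equation*}
The final step is to replace $\alpha_{f_\alpha}$ by $\alpha$ on the left-hand side, and this is exactly the content of Lemma~\ref{lem:MI}: the maps $A$ and $F$ are mutually inverse, so in particular $A(F(\alpha))=\alpha_{f_\alpha}=\alpha$. Therefore $\delta^{\alpha}(v)=\delta^{\alpha_{f_\alpha}}(v)=\Delta^{f_\alpha}(v)$, which is the claimed identity.

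The proof is essentially a one-line composition of results already in hand, so there is no serious obstacle; the only thing to watch is that Lemma~\ref{lem:Amap} is formulated in terms of selectors of the form $\alpha_f$ (i.e.\ those lying in the image of $A$), so one must invoke the surjectivity established in Lemma~\ref{lem:MI} to conclude the identity for an arbitrary consistent selector $\alpha$. If I wanted a self-contained argument, I could instead argue directly from \eqref{marg_Hars} and \eqref{def:selectorvalue} by showing the equivalence $i=\alpha(S)\Leftrightarrow S\subseteq S_{f_\alpha}(i),\;i\in S^*$, which unfolds using consistency of $\alpha$ and the recursive definition of $g_\alpha$; but the cleaner path is simply to chain Lemmas~\ref{lem:Amap} and~\ref{lem:MI}.
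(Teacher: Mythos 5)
Your proof is correct, but it takes a genuinely different route from the paper. The paper proves the identity directly: writing $i=g_\alpha(\ell)$, it expresses both $\Delta^{f_\alpha}_i(v)$ (via \eqref{marg_Hars}) and $\delta^{\alpha}_i(v)$ as sums of Harsanyi dividends and then verifies by hand the equality of the two index sets, $\{S\in\calD\mid S\subseteq S_{f_\alpha}(i),\ i\in S^*\}=\{S\in\calD\mid \alpha(S)=i\}$ (this is \eqref{eqsets}), using consistency of $\alpha$ together with the recursive construction of $g_\alpha$ --- work of much the same flavour as the proof of Lemma~\ref{lem:MI}. You instead observe that the statement is a formal corollary of results already in hand: $f_\alpha\in\calL(P)$ (the lemma preceding Lemma~\ref{lem:MI}), the identity $\delta^{\alpha_f}(v)=\Delta^f(v)$ of Lemma~\ref{lem:Amap} instantiated at $f=f_\alpha$, and $\alpha_{f_\alpha}=A(F(\alpha))=\alpha$ from Lemma~\ref{lem:MI}. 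This composition is logically sound --- there is no circularity, since the paper's proofs of Lemmas~\ref{lem:Amap} and~\ref{lem:MI} nowhere invoke Lemma~\ref{lem:consel} --- and it is shorter, making explicit that the lemma carries no new combinatorial content beyond $A\circ F=\id$; what the paper's direct argument buys is a self-contained proof that exhibits the index-set equality \eqref{eqsets} explicitly rather than routing through the bijection. Your closing remark correctly pinpoints the one step that needs care: Lemma~\ref{lem:Amap} only covers selectors of the form $\alpha_f$, and it is precisely the identity $A(F(\alpha))=\alpha$ that extends the conclusion to an arbitrary consistent selector.
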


\begin{proof}
     Let $\alpha\in\calC(\calD)$. For the sake of compact notation in the proof, put $f=f_\alpha$ and $g=g_\alpha$. Let $v\in\sfG$, $i\in P$ and assume that $i=g(\ell)$. Then \eqref{marg_Hars} implies
    \begin{equation}\label{eq:marg}
        \Delta_{g(\ell)}^f(v) = \sum_{\substack{S\subseteq S_f(g(\ell))\\ g(\ell)\in S^*}} \hat{v}(S).
    \end{equation}
By the definition of selector value,
    \begin{equation}\label{eq:sel}
        \delta_{g(\ell)}^{\alpha}(v) = \sum_{\substack{S\in\calD\\ \alpha(S)=g(\ell)}} \hat{v}(S).
    \end{equation}
    If we show that 
    \begin{equation}\label{eqsets}
        \{S\in\calD \mid S\subseteq S_f(g(\ell)),\; g(\ell)\in S^*\} = \{S\in\calD \mid \alpha(S)=g(\ell)\},
    \end{equation}
    then \eqref{eq:marg} and \eqref{eq:sel} are equal. First, observe that
    $$
    S_f(g(\ell)) = \{j\in P \mid g^{-1}(j)\leq \ell\} = \{g(1),\dots, g(\ell)\}.
    $$
    Let $S\subseteq S_f(g(\ell))$ and $g(\ell)\in S^*$. Then $\alpha(S_f(g(\ell)))=g(\ell)$ by the definition of $g$. As $\alpha$ is consistent, $\alpha(S)=g(\ell)$. Conversely, let $S\in\calD$ be such that $\alpha(S)=g(\ell)$. Then necessarily $g(\ell)\in S^*$. We want to prove $S\subseteq S_f(g(\ell))$. There exists some $j\in S^*$ such that $f(k)\leq f(j)$ for all $k\in S$. Then 
    \begin{equation}\label{subsetS}
        S\subseteq S_f(j).
    \end{equation}
     Since we can write $j=g(\ell')$ for some $\ell'$, we get $\alpha(S_f(j))=j$ by the definition of~$g$. By consistency of $\alpha$, it follows that $j=\alpha(S)=g(\ell)$. Therefore $S\subseteq S_f(g(\ell))$ by \eqref{subsetS}. In conclusion, \eqref{eqsets} is true, which finishes the proof of \eqref{eq:id}.
    \end{proof}
Lemma \ref{lem:consel} ensures this property: If $\alpha$ is a consistent selector and $v$ is a monotone game, then the vector $\delta^{\alpha}(v)$ has nonnegative coordinates. This is not true in case $\alpha$ is inconsistent.

\begin{example}
Let $P$ be the boolean algebra of rank $3$ -- see Figure \ref{FigBA8}. Let $T =\langle a,b,c\rangle$ and $S=\langle a,b\rangle$ and consider any inconsistent selector $\alpha$ such that $\alpha(T) = a$, $\alpha(S) = b$, and $\alpha(\langle a,c\rangle)=c$. Then $\delta^{\alpha}_a (v) = \hat{v}(\langle a,b,c\rangle ) + \hat{v}(\langle a\rangle)$, which is not necessarily positive even if $v$ is monotone.
\end{example}

\subsection{Sharing values}
We introduce sharing systems analogously to \cite[Section 3.5]{Grabisch16} or \cite{Derks00}.

\begin{definition}
    A \emph{sharing system} is a mapping $q\colon \calD\times P\to[0,1]$ such that for each nonempty coalition $S\in\calD$,
\begin{itemize}
    \item $q(S,.)\colon P\to [0,1]$ is a probability distribution,
    \item  $q(S,i)=0$ for each $i\in P\setminus S^*$, and
    \item $q(\emptyset,i)=0$ for all $i\in P$.
\end{itemize}
   The \emph{sharing value} (or \emph{Harsanyi solution}) associated with a sharing system $q$ is a value $\pi^{q}$ defined by 
    \begin{equation}\label{def:sharingv}
    \pi^q_i(v) = \sum_{\substack{S\in\calD \\ i\in S^*}} q(S,i)\cdot \hat{v}(S), \qquad i\in P,\; v\in\sfG.
    \end{equation}
\end{definition}

When the coalitional worth is distributed according to a sharing value, every dividend $\hat{v}(S)$ is split only among the maximal players in $S$. We argue that this restriction to the distribution among the maximal players is sensible in our setting:
\begin{itemize}
    \item The maximal players $i\in S^*$ can be thought of as the superiors of players in $S\setminus S^*$, so it makes sense to limit the allocation process of the Harsanyi dividend $\hat{v}(S)$ to them.
    \item The set $S^*$ is an antichain, so we can allocate the Harsanyi dividend $\hat{v}(S)$ freely as long as we respect the sharing system $q$.
    \item Since the player set is a boolean algebra, every player acts the same number of times as a maximal player.
    \item It is computationally simpler to evaluate the sum \eqref{def:sharingv} for a smaller number of players with the size of down-sets $S$ increasing.

\end{itemize}

\begin{lemma}\label{lem:sv}
    Every sharing value $\pi^q$ fullfills Linearity, Efficiency, and Null player axiom.
\end{lemma}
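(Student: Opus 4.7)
The plan is to verify each of the three axioms separately, exploiting the fact that $\pi^q$ is expressed as an affine combination of Harsanyi dividends indexed by the maximal elements of coalitions. Each part reduces to a short direct computation using results already established in Section \ref{sec:coal}.

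For \textbf{Linearity}, I would first observe that the map $v \mapsto \hat{v}$ from $\sfG$ to $\sfG$ is linear, since formula \eqref{game_linear} expresses $v$ as a linear combination of the unanimity games with the Harsanyi dividends as coordinates, and Harsanyi dividends are in fact the coordinates of $v$ in that basis. Composing this linear operator with the fixed linear functional $v \mapsto \sum_{S \ni i \in S^*} q(S,i)\,\hat{v}(S)$ yields Linearity of $\pi^q_i$ for every $i \in P$.

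For \textbf{Efficiency}, I would interchange the order of summation:
\begin{equation*}
    \sum_{i\in P} \pi^q_i(v) = \sum_{i\in P}\sum_{\substack{S\in\calD \\ i\in S^*}} q(S,i)\,\hat{v}(S) = \sum_{\emptyset \neq S \in \calD}\hat{v}(S)\sum_{i\in S^*} q(S,i).
\end{equation*}
By the definition of a sharing system, $q(S,\cdot)$ is a probability distribution supported on $S^*$ for every nonempty $S\in\calD$, so the inner sum equals $1$. The outer sum then becomes $\sum_{\emptyset\neq S\in\calD}\hat{v}(S)$, which equals $v(P)$ by \eqref{game_linear_coord} applied to $S=P$ (and the convention $\hat{v}(\emptyset)=0$).

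For the \textbf{Null player axiom}, let $i\in P$ be a null player in $v$. In the defining sum \eqref{def:sharingv}, every term is indexed by a coalition $S\in\calD$ with $i \in S^*$; Proposition \ref{harsanyinull} gives $\hat{v}(S)=0$ for every such $S$, and hence $\pi^q_i(v)=0$. No step is a real obstacle: the only substantive ingredient is Proposition~\ref{harsanyinull}, all other manipulations are bookkeeping with the definitions.
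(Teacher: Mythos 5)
Your proposal is correct and follows essentially the same route as the paper's proof: Linearity via linearity of the Harsanyi dividends (Möbius transform), Efficiency by swapping the order of summation and using that $q(S,\cdot)$ sums to one over $S^*$ together with \eqref{game_linear_coord}, and the Null player axiom via Proposition \ref{harsanyinull}. No gaps; the only cosmetic difference is that you justify linearity of $v\mapsto\hat{v}$ through the coordinate interpretation in the unanimity basis, whereas the paper simply cites the identities $\widehat{v_1+v_2}=\widehat{v_1}+\widehat{v_2}$ and $\widehat{\alpha v_1}=\alpha\widehat{v_1}$.
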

\begin{proof}
    Linearity of $\pi^q$ follows immediately from the linearity of Harsanyi dividends: $\widehat{v_1+v_2}=\widehat{v_1}+\widehat{v_2}$ and $\widehat{\alpha v_1}=\alpha\widehat{v_1}$, for all $v_1,v_2\in\sfG$, $\alpha\in\R$. As for Efficiency, let $v\in\sfG$. Then
    \begin{align*}
        \sum_{i\in P}  \pi^q_i(v) & = \sum_{i\in P} \sum_{\substack{S\in\calD \\ i\in S^*}} q(S,i)\cdot \hat{v}(S) = \sum_{S\in\calD} \sum_{\substack{i\in S^*}} q(S,i)\cdot \hat{v}(S)\\
        & = \sum_{S\in\calD} \hat{v}(S) \underbrace{\sum_{\substack{i\in S^*}} q(S,i)}_1 = \sum_{S\in\calD} \hat{v}(S) = v(P),
    \end{align*}
    where the last equality is a consequence of \eqref{game_linear_coord}. Hence Efficiency of $\pi^q$. 
    Finally, let $i\in P$ be a null player in $v$. Then every Harsanyi dividend $\hat{v}(S)=0$ whenever $i\in S^*$ by Proposition \ref{harsanyinull}, which implies that $\pi_i^q=0$. 
\end{proof}
A sharing value may not be positive as the following examples demonstrate.
\begin{example}[Priority value]\label{ex:nonpos}
    We adopt the concept of priority value constructed for the different class of games \cite[Formula (4)]{Beal21} (see also the discussion in Section \ref{sec:other}) and adapt the definition to our setting. This leads to the sharing system $q$ such that $$q(S,i)=
    \begin{cases} \frac{1}{|S^*|} & i \in S^*,\\
        0 & \text{otherwise,}
    \end{cases} 
    $$ where the associated sharing value $\pi^q$ given by \eqref{def:sharingv} is 
\begin{equation}\label{sharingD:prio}
\pi^q_i(v)=\sum_{\substack{S\in\calD \\ i\in S^*}} \frac{\hat{v}(S)}{|S^*|}, \qquad v\in\sfG,\;i\in P.
\end{equation}
The idea is that each Harsanyi dividend is distributed among the maximal players uniformly. However, the sharing value \eqref{sharingD:prio} fails Positivity axiom. An example of a monotone game $v$ for which some $\pi_i^q(v)$ is negative can be found using Table \ref{tab:H8} of Harsanyi dividends.
\end{example}

\begin{example}[Proportional value] As a more sophisticated variant to the uniform split used in the priority value (Example \ref{ex:nonpos}), we can distribute the dividend proportionally to each player's rank $\rho(i)$ in the boolean algebra $(P,\preceq)$. We define
$$
\rho(S^*) = \sum_{i\in S^*} \rho(i), \qquad S\in\calD,
$$ 
and $q(S,i)= \frac{\rho(i)}{\rho(S^*)}$, for any $i\in S^*$, and $q(S,i)=0$, otherwise. Then $q$ is a sharing system and the \emph{proportional value} is the associated sharing value $\pi^q$, that is,
$$
\pi^q_i(v) = \sum_{\substack{S\in\calD \\ i\in S^*}} \frac{\rho(i)}{\rho(S^*)}\cdot \hat{v}(S), \qquad i\in P, \;v\in\sfG.
$$
It follows from \eqref{Harsanyi_unanimity} and from the definition of proportional value that
\begin{equation}\label{prop_unanimity}
    \pi_i(u_T) = \begin{cases} 
        \frac{\rho(i)}{\rho(T^*)} & i \in T^*,\\
        0 & \text{otherwise.}
    \end{cases}
\end{equation}

It can be shown that the proportional values satisfies Symmetry. First, we note that for any nonempty $T\in\calD$ and any boolean automorphism $\sigma$,
    \begin{equation}\label{eq_un_switch}
        \sigma u_T=u_{\sigma(T)}.
    \end{equation}
    Indeed, this follows from the equivalence of the inclusions $T\subseteq \sigma^{-1}(S)$ and $\sigma(T)\subseteq S$, for any $S\in\calD$. We will show that
    \begin{equation}\label{unan_sym}
        \pi^q_{\sigma(i)}(\sigma u_T) = \pi^q_i(u_T), \qquad i\in P.
    \end{equation}
    Using \eqref{eq_un_switch} and \eqref{prop_unanimity}, we obtain
    $$
    \pi^q_{\sigma(i)}(\sigma u_T) = \pi^q_{\sigma(i)}(u_{\sigma(T)})=\frac{\rho(\sigma(i))}{\sum\limits_{j\in \sigma(T)^*} \rho(j)} 
    $$
    when $\sigma(i)\in\sigma(T)^*$. The key observation is that any boolean automorphism $\sigma\colon P\to P$ preserves the ranks of players in $P$, that is, $\rho(\sigma(i))=\rho(i)$ for all $i\in P$. Moreover, any such $\sigma$ is necessarily an order-preserving map. Thus, the condition $\sigma(i)\in\sigma(T)^*$ is equivalent to $i\in T^*$, and
    $$
    \frac{\rho(\sigma(i))}{\sum\limits_{j\in \sigma(T)^*} \rho(j)} = \frac{\rho(i)}{\sum\limits_{j\in \sigma(T^*)} \rho(j)} = \frac{\rho(i)}{\sum\limits_{k\in T^*} \rho(\sigma(k))} =  \frac{\rho(i)}{\sum\limits_{k\in T^*} \rho(k)}.
    $$
    This implies that \eqref{unan_sym} holds, so $\pi^q$ is symmetric over all unanimity games $u_T$.

    Now, let $v\in\sfG$ be an arbitrary game, and consider any player $i\in P$ and any boolean automorphism $\sigma$. Then, by \eqref{game_linear}, linearity of $\pi^q$ and the identity $\widehat{\sigma v}=\sigma\hat{v}$, we get:
    $$
    \pi^q_{\sigma(i)}(\sigma v) = \sum_{\emptyset \neq T\in\calD} \widehat{\sigma v}(T) \pi^q_{\sigma(i)}(u_T)=\sum_{\emptyset \neq T\in\calD} \hat{v}(\sigma^{-1}(T))\pi^q_{\sigma(i)}(u_T).
    $$
    Employing \eqref{unan_sym}, the last term on the right-hand side is equal to
    $$
    \sum_{\emptyset \neq T\in\calD} \hat{v}(T)\pi^q_{\sigma(i)}(u_{\sigma(T)}) = \sum_{\emptyset \neq T\in\calD} \hat{v}(T) \pi^q_i(u_T) = \pi^q_i(v).
    $$

    However, also the proportional value fails Positivity. Th smallest counterexample can be exhibited on the boolean algebra of rank $4$, which involves $2^4$ players and $168$ coalitions (see Table \ref{tabDed}).
\end{example}

It turns out that the three axioms from Lemma \ref{lem:sv} and Positivity characterize the class of positive sharing values. This result extends \cite[Theorem 4(a)]{Derks00}. 
\begin{proposition}\label{pro:PSV}
Let $\phi \colon \mathcal{G} \to \mathbb{R}^P$ a positive value. The following are equivalent.
\begin{enumerate}
    \item $\phi$ is a sharing value. 
    \item $\phi$ satisfies Linearity, Null player axiom, and Efficiency.
\end{enumerate}
\end{proposition}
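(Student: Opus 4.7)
The plan is to prove the nontrivial direction (2) $\Rightarrow$ (1); the converse is essentially Lemma \ref{lem:sv}, since any sharing value was shown there to satisfy Linearity, Null player axiom, and Efficiency. So I would assume $\phi$ is positive and satisfies those three axioms, and construct a sharing system $q$ such that $\phi=\pi^q$.

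The construction is to read $q$ directly from the values of $\phi$ on the unanimity basis: set $q(\emptyset,\cdot):=0$ and, for every nonempty $T\in\calD$ and $i\in P$,
\[
q(T,i) := \phi_i(u_T).
\]
Once $q$ is known to be a sharing system, the identity $\phi=\pi^q$ is an immediate consequence of Linearity combined with the expansion $v=\sum_{\emptyset\ne T\in\calD}\hat v(T)\,u_T$, since
\[
\phi_i(v) = \sum_{\emptyset\ne T\in\calD}\hat v(T)\,\phi_i(u_T) = \sum_{\substack{T\in\calD\\ i\in T^*}} q(T,i)\,\hat v(T) = \pi^q_i(v).
\]
So the substantive work is verifying that $q$ as defined above satisfies the three axioms of a sharing system. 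Nonnegativity $q(T,i)\ge 0$ is immediate from Positivity applied to the monotone game $u_T$, and the total-mass condition $\sum_{i\in P}q(T,i)=1$ follows from Efficiency together with $u_T(P)=1$.

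The step I expect to be the main obstacle is the vanishing requirement $q(T,i)=0$ for every $i\in P\setminus T^*$. The paper only records that each $i\in P\setminus T$ is null in $u_T$; I would additionally show that every $i\in T\setminus T^*$ is null in $u_T$. Fix such $i$ and pick $j\in T^*$ with $i\prec j$ (which exists because $i\notin T^*$). Take $S\in\calD$ with $i\notin S$ and $S\cup\{i\}\in\calD$. If $T\subseteq S\cup\{i\}$, then $j\in S\cup\{i\}$, and since $j\ne i$ we get $j\in S$; but $S$ is a down-set and $i\prec j$, forcing $i\in S$, a contradiction. Hence $u_T(S)=u_T(S\cup i)=0$, so $i$ is null in $u_T$, and the Null player axiom gives $\phi_i(u_T)=0$. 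This nullness argument, which relies on the down-set structure of coalitions, is the only point not already visible from the standard-model analogue, and once it is in place the remainder of the argument is mechanical.
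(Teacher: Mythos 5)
Your proposal is correct and follows essentially the same route as the paper: both directions are handled identically, with $q(T,i):=\phi_i(u_T)$ read off the unanimity basis, Positivity and Efficiency giving nonnegativity and total mass $1$, and the Null player axiom killing $\phi_i(u_T)$ for $i\in P\setminus T^*$. The only difference is that you explicitly verify that players in $T\setminus T^*$ are null in $u_T$ (using the down-set structure), a point the paper simply declares ``easy to see''; your verification is correct.
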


\begin{proof}
   The implication $1. \Rightarrow 2.$ is Lemma \ref{lem:sv}.  Conversely, let $\phi$ be a positive value satisfying the three axioms. Any game $v$ can be re\-presented as a linear combination of unanimity games \eqref{game_linear}. Then linearity of $\phi$ gives
\[
\phi_i (v) = \sum_{\emptyset \neq S\in \mathcal{D} } \hat{v} (S) \phi_i (u_S), \qquad i\in P.
\]
We will prove that $q \colon \calD \times P \to [0,1]$ such that $q(S, i)=\phi_i (u_S)$ is a sha\-ring system. Since $\phi$ satisfies Positivity, clearly $\phi_i (u_S) \geq 0$. By Efficiency, $$\sum_{i \in P} q(S,i)=\sum_{i \in P} \phi_i (u_S) = u_S (P) = 1.$$ It is easy to see that any player $i \in P \setminus S^*$ is a null player in $u_S$, so by Null player axiom  $\phi_i (u_S) = 0$ for every $i \in P \setminus S^*$. Consequently, $q$ is a sharing system and $\phi$ coincides with the sharing value $\pi^q$. 
\end{proof}

Our next goal is to give an alternative description of positive sharing values and to characterize the random-order values within positive sharing values. To this end, we will show that the class of sharing values coincides with the class of average selector values. This connection becomes instrumental in studying positivity of the sharing values \eqref{eq:selq}. The following result is essentially \cite[Lemma 4]{Derks00}, which was proved in the context of standard coalitional games.

\begin{lemma} \label{pro:sharing}
    The following assertions hold true.
    \begin{enumerate}
        \item For any sharing system $q$, this function is a probability distribution:
        $$
        p_q(\alpha) = \prod_{\emptyset\neq S\in\calD} q(S,\alpha(S)), \qquad \alpha\in\calS(\calD).
        $$
        The sharing value corresponding to $q$ is
        \begin{equation}\label{eq:sharesel}
            \pi^q(v) = \sum_{\alpha \in \calS(\calD)} p_q(\alpha) \cdot \delta^\alpha(v), \qquad v\in\sfG.
        \end{equation}
        \item Conversely, let $p$ be a probability distribution over $\calS(\calD)$. Define $$q_p(S,i)=\sum_{\substack{\alpha \in \calS(\calD)\\ \alpha(S)=i}} p(\alpha), \qquad \emptyset\neq S\in\calD,\; i\in S^*,$$
        and $q_p(S,i)=0$, otherwise. Then $q_p$ is a sharing system and
        \begin{equation}\label{eq:selq}
            \pi^{q_p}(v) = \sum_{\alpha\in\calS(\calD)} p(\alpha)\cdot \delta^\alpha(v), \qquad v\in\sfG.
        \end{equation}
    \end{enumerate}
\end{lemma}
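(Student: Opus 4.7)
The plan is to exploit the product structure of $\calS(\calD)$: since a selector $\alpha$ is precisely a choice of one maximal element $\alpha(S)\in S^*$ for each nonempty $S\in\calD$, the set of selectors is in bijection with the Cartesian product $\prod_{\emptyset\neq S\in\calD} S^*$. This lets me swap sums over selectors with products over coalitions, which is the engine behind both halves of the lemma.

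For the first assertion, I would first verify that $p_q$ is a probability distribution. Nonnegativity is immediate from $q\geq 0$, while normalization follows from the product decomposition
$$\sum_{\alpha\in\calS(\calD)} p_q(\alpha) = \sum_{\alpha} \prod_{\emptyset\neq S\in\calD} q(S,\alpha(S)) = \prod_{\emptyset\neq S\in\calD} \sum_{i\in S^*} q(S,i) = 1,$$
using that each $q(S,\cdot)$ sums to $1$ on $S^*$. The identity \eqref{eq:sharesel} then follows coordinate-wise by exchanging summations:
$$\sum_{\alpha} p_q(\alpha)\, \delta^\alpha_i(v) = \sum_{\substack{S\in\calD \\ i\in S^*}} \hat{v}(S) \sum_{\alpha:\alpha(S)=i} p_q(\alpha),$$
where the inner sum factors as $q(S,i)\cdot \prod_{T\neq S} \sum_{j\in T^*} q(T,j) = q(S,i)$, which recovers $\pi^q_i(v)$ from \eqref{def:sharingv}.

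The converse will proceed by a mirror-image argument. Checking that $q_p$ is a sharing system reduces to $\sum_{i\in S^*} q_p(S,i)=1$ for every nonempty $S$; this holds by partitioning $\calS(\calD)$ according to the value $\alpha(S)$ and using $\sum_\alpha p(\alpha)=1$. The remaining properties ($q_p(S,i)=0$ off $S^*$, $q_p(\emptyset,\cdot)=0$) are baked into the definition. The identity \eqref{eq:selq} then comes from swapping sums in the opposite direction:
$$\pi^{q_p}_i(v) = \sum_{\substack{S\in\calD \\ i\in S^*}} \hat{v}(S) \sum_{\alpha:\alpha(S)=i} p(\alpha) = \sum_{\alpha} p(\alpha) \sum_{\substack{S\in\calD \\ \alpha(S)=i}} \hat{v}(S) = \sum_{\alpha} p(\alpha)\, \delta^\alpha_i(v).$$

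I do not anticipate any serious obstacle; both halves are finite Fubini-type rearrangements. The only subtlety worth emphasizing is the observation that a selector is nothing more than an element of $\prod_{\emptyset\neq S\in\calD} S^*$, so that $p_q$ factorizes across coalitions and the normalization constants telescope. Once this identification is made explicit, everything else is routine double-sum manipulation combined with the definitions of $\delta^\alpha$ and $\pi^q$.
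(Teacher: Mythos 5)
Your proof is correct, and its engine is the same as the paper's: the crucial fact in both arguments is that marginalizing the product distribution $p_q$ over the event $\alpha(S)=i$ yields $q(S,i)$, which is exactly the paper's identity \eqref{eq:toprove}. The difference is organizational. The paper invokes Linearity to reduce \eqref{eq:sharesel} and \eqref{eq:selq} to unanimity games and then establishes \eqref{eq:toprove} by iteratively factoring out the sums $\sum_{j\in T_1^*}q(T_1,j)=1$ one coalition at a time (and only sketches why $p_q$ is a probability distribution). You instead make explicit the identification $\calS(\calD)\cong\prod_{\emptyset\neq S\in\calD}S^*$, which lets you (i) prove normalization of $p_q$ in one line via $\sum_\alpha\prod_S q(S,\alpha(S))=\prod_S\sum_{i\in S^*}q(S,i)=1$, and (ii) verify both identities for arbitrary $v$ by a direct Fubini swap on the Harsanyi-dividend expansion of $\delta^\alpha_i(v)$, with the conditional sum $\sum_{\alpha:\alpha(S)=i}p_q(\alpha)$ factoring immediately to $q(S,i)$. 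What your route buys is a cleaner, basis-free presentation that also fills in the normalization claim the paper leaves implicit; what the paper's route buys is that the only computation ever needed is on unanimity games, with the telescoping done explicitly. One small point of care in your swap: after exchanging sums you should note that for $S$ with $i\notin S^*$ the inner sum over $\{\alpha:\alpha(S)=i\}$ is empty, so the outer sum correctly restricts to coalitions with $i\in S^*$, matching \eqref{def:sharingv}; you state the restricted sum without flagging this, but it is immediate.
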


\begin{proof}
    1. That $p_q$ is a probability distribution follows from an easy adaptation of the second part of the proof. Since both values $\pi^q(v)$ and $\delta^\alpha(v)$ are linear maps in games $v\in\sfG$, it suffices to verify \eqref{eq:sharesel} for any unanimity game $u_T$. We have
    $$
    \pi^q_i(u_T) = \begin{cases} q(T,i) & i\in T^*,\\ 
        0 & i\notin T^*,
    \end{cases}
    \qquad
    \text{and}
    \qquad
    \delta_i^\alpha(u_T)=\begin{cases}
        1 & i=\alpha(T), \\
        0 & i\neq \alpha(T).
    \end{cases}
    $$
    If $i\notin T^*$, then $i\neq \alpha(T)$ and both sides of \eqref{eq:sharesel} are zero. Let $i\in T^*$. Then \eqref{eq:sharesel} reads as
    \begin{equation}\label{eq:toprove}
        q(T,i) = \sum_{\substack{\alpha\in\calS(\calD) \\ \alpha(T)=i}} \prod_{\emptyset\neq S\in\calD} q(S,\alpha(S)).
    \end{equation}
    Let $T_1\neq T,\emptyset$. Then the sum on the right-hand side above is equal to
    \begin{equation*}
        q(T,i)\sum_{\substack{\alpha\in\calS(\calD) \\ \alpha(T)=i}} \prod_{\substack{S\in\calD \\ S\neq \emptyset,T}} q(S,\alpha(S))=q(T,i)\sum_{\substack{\alpha\in\calS(\calD) \\ \alpha(T)=i}}q(T_1,\alpha(T_1)) \prod_{\substack{S\in\calD \\ S\neq \emptyset,T,T_1}} q(S,\alpha(S)).
    \end{equation*}
    The last expression can be written as 
    \begin{equation} \label{eq:last}
    q(T,i)\underbrace{\sum_{j\in T_1^*} q(T_1,j)}_1\sum_{\substack{\alpha\in\calS(\calD)\\ \alpha(T)=i \\ \alpha(T_1)=j}}\prod_{\substack{S\in\calD \\ S\neq \emptyset,T,T_1}} q(S,\alpha(S)).
    \end{equation}
    By repeating the last step for the remaining down-sets in $\calD\setminus \{\emptyset,T,T_1\}$, the term \eqref{eq:last} is reduced to $q(T,i)$. This proves identity \eqref{eq:toprove}.

    2. The mapping $q_p$ is a sharing system since, for every nonempty $S\in\calD$,
    $$
    \sum_{i\in S^*} q_p(S,i)= \sum_{i\in S^*} \sum_{\substack{\alpha \in \calS(\calD)\\ \alpha(S)=i}} p(\alpha)=\sum_{\alpha\in\calS(\calD)} p(\alpha)=1.
    $$ 
   The second equality above is true as the sets $\{\alpha \in \calS(\calD) \mid \alpha(S)=i\}$ form a partition of $\calS(\calD)$. In order to show that \eqref{eq:selq} holds true, let $v=u_T$ for nonempty $T\in\calD$. Suppose $i\in T^*$. Then
    $$\pi_i^q(u_T)=q_p(T,i)=\sum_{\substack{\alpha \in \calS(\calD)\\ \alpha(T)=i}} p(\alpha)= \sum_{\alpha\in\calS(\calD)} p(\alpha) \delta_i^\alpha(u_T).$$
   Let $i\notin T^*$. Then both sides of \eqref{eq:selq} are zero, as $\pi_i^q(u_T)=0$ and $\delta_i^\alpha(u_T)=0$ for any selector $\alpha$.
\end{proof}

We will use Lemma \ref{pro:sharing} to characterize random-order values in terms of certain sharing values. Namely, the family of random-order values coincides with the family of sharing values $\pi^{q_p}$ (see \eqref{eq:selq}) whose sharing systems $q_p$ correspond to probability distributions $p$ supported by a subset of consistent selectors.

\begin{proposition} \label{random sharing}
Random-order values are exactly the sharing values \eqref{eq:selq} such that $p(\alpha) = 0$ for every $\alpha\in\calS(\calD)\setminus\calC(\calD)$. 
\end{proposition}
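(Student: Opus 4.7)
The plan is to derive the equivalence directly from the bijection between linear extensions and consistent selectors established in Lemma \ref{lem:MI}, together with the identification of marginal vectors and consistent selector values given by Lemma \ref{lem:Amap} and Lemma \ref{lem:consel}. Since both random-order values and the sharing values in question are defined as convex combinations (of marginal vectors $\Delta^f(v)$ and selector values $\delta^\alpha(v)$, respectively), the whole statement reduces to transporting a probability distribution across the bijection $A \colon \calL(P) \to \calC(\calD)$.

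The first direction to handle is the ``$\Rightarrow$'' inclusion. Given a random-order value $\phi^r$ with distribution $r$ on $\calL(P)$, I would define $p \colon \calS(\calD) \to [0,1]$ by $p(\alpha) = r(F(\alpha))$ for $\alpha \in \calC(\calD)$ and $p(\alpha) = 0$ otherwise. That $p$ is a probability distribution follows immediately from Lemma \ref{lem:MI} (since $F$ is a bijection). Applying Lemma \ref{lem:consel} and the formula in part 2 of Lemma \ref{pro:sharing},
\begin{equation*}
\pi^{q_p}(v) = \sum_{\alpha \in \calS(\calD)} p(\alpha)\, \delta^\alpha(v) = \sum_{\alpha \in \calC(\calD)} p(\alpha)\, \Delta^{f_\alpha}(v) = \sum_{f \in \calL(P)} r(f)\, \Delta^f(v) = \phi^r(v),
\end{equation*}
where the penultimate equality uses the change of variables $\alpha = A(f)$ and $f_{\alpha_f} = f$ (a consequence of Lemma \ref{lem:MI}).

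The reverse inclusion ``$\Leftarrow$'' is entirely symmetric. Given a probability distribution $p$ on $\calS(\calD)$ with $p(\alpha) = 0$ for all inconsistent $\alpha$, define $r(f) = p(A(f))$ on $\calL(P)$. Again $r$ is a probability distribution by Lemma \ref{lem:MI}, and reversing the computation above shows that the corresponding random-order value $\phi^r$ agrees with $\pi^{q_p}$, since only consistent selectors contribute to the sharing value by hypothesis.

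There is essentially no hard step; the only thing to be careful about is that the bijection $A$ (and its inverse $F$) really does translate the two representations verbatim, which is exactly the content of Lemmas \ref{lem:Amap}, \ref{lem:consel}, and \ref{lem:MI}. No separate verification of positivity, linearity, or efficiency is needed, because the equality of the allocation vectors on all of $\sfG$ already identifies the two values.
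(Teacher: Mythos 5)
Your proposal is correct and follows essentially the same route as the paper: both transport the probability distribution across the bijection of Lemma \ref{lem:MI} (your $p(\alpha)=r(F(\alpha))$ is exactly the paper's $\bar r$), use the identity $\delta^{\alpha}(v)=\Delta^{f_\alpha}(v)$, and invoke the second part of Lemma \ref{pro:sharing} to identify the resulting average selector value with the sharing value $\pi^{q_p}$. No gaps.
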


\begin{proof}
Let $\phi^r$ be a random-order value \eqref{eq:RO}. Since $r$ is a probability distribution over linear extensions, by the one-to-one correspondence between linear extensions in $\calL(\calP)$ and consistent selectors $\calC(\calD)$ (see Lemma \ref{lem:MI}), we can define the probability distribution $\bar{r}$ over all selectors by
$$
\bar{r}(\alpha)=
\begin{cases}
    r(f_\alpha) & \alpha\in\calC(\calD), \\
    0 & \alpha \in \calS(\calD)\setminus \calC(\calD),
\end{cases}
$$
where $f_\alpha$ is the linear extension \eqref{def:map} corresponding to $\alpha$.
Using this correspondence and \eqref{eq:id}, we obtain for every $i\in P$ and every $v\in\sfG$,
\[
 \phi^r_i(v) = \sum_{f\in\mathcal{L}(P)} r(f)  \Delta_i^f(v) = \sum_{f\in\mathcal{L}(P)} r(f)  \delta_i^{\alpha_f}(v) = 
\sum_{\alpha \in \mathcal{S}(\mathcal{D})} \bar{r}(\alpha)  \delta_i^{\alpha}(v).
\]
By the second part of Lemma \ref{pro:sharing}, if $q_{\bar{r}}$ is the sharing system associated with~$\bar{r}$, then the last sum is equal to the sharing value $\pi_i^{q_{\bar{r}}}(v)$.

Conversely, let $\pi^{q_p}$ be the sharing value \eqref{eq:selq} such that $p(\alpha) = 0$ for every $\alpha\in\calS(\calD)\setminus\calS(\calD)$. Then it is easy to see using the above equalities and Lemma~\ref{pro:sharing}  that $\pi^{q_p}$ is a random-order value.
\end{proof}

In particular, Proposition \ref{random sharing} and the positivity of random-order values imply that every sharing value satisfying the property from Proposition \ref{random sharing} is positive. The natural question of interest is whether the converse holds. Namely, if a probability distribution $p$ over $\calS(\calD)$ gives positive probability $p(\alpha)>0$ to some inconsistent selector $\alpha\in\calS(\calD)\setminus \calC(\calD)$, is the resulting sharing value automatically non-positive? We shall prove that this is not the case by providing a counterexample --- see Example \ref{positivesharing}. To this end, we introduce the concept of ``local'' inconsistency. 

\begin{definition}\label{def:inco}
We say that a selector $\alpha$ is \emph{inconsistent on} $t\in P$ if there exist down-sets $S,T\in\calD$ such that $S \subset T$, $\alpha(T) = t \in S^*$, and $\alpha(S) \neq \alpha(T)$. Otherwise, we shall say that a selector $\alpha$ is \emph{consistent on} $t$. 
\end{definition}
\noindent
Note that a consistent selector $\alpha\in\calC(\calD)$ is consistent on every element $t\in P$ and that an inconsistent selector $\alpha\in\calS(\calD)\setminus\calC(\calD)$ may be consistent on some elements.

\begin{lemma}\label{consistent}
Let $\alpha$ be a selector consistent on $t\in P$. Assume that there exists $T\in\calD$ such that for every $S\in\calD$ we have $\alpha(S)=\alpha(T)=t$ and $S \subseteq T$. Then the corresponding selector value of game $v\in\sfG$ is 
$\delta_t^{\alpha}(v) = v(T) - v(T \setminus t)$. 
%\end{equation}
\end{lemma}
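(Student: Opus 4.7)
The plan is to unpack the hypothesis as the statement that $T$ is the (unique) maximal down-set on which $\alpha$ returns $t$, and then to show that the set indexing the sum in the selector value \eqref{def:selectorvalue} is precisely the down-sets between $\{t\}$ and $T$ that contain $t$. Once this identification is made, the conclusion reduces to a one-line Möbius-type calculation using the coordinate expansion \eqref{game_linear_coord}.

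More concretely, I would first observe that $T\setminus t$ is a down-set: since $\alpha(T)=t$ and $\alpha(T)\in T^{*}$ by definition of a selector, $t$ is maximal in $T$, so the earlier remark that removing a maximal element of a down-set yields a down-set applies. Then I would prove the set equality
\[
\{S\in\calD \mid \alpha(S)=t\} \;=\; \{S\in\calD \mid t\in S\subseteq T\}.
\]
The forward inclusion is immediate from $\alpha(S)\in S^{*}\subseteq S$ together with the maximality hypothesis on $T$. For the reverse inclusion, given $t\in S\subseteq T$, I would first note that $t\in T^{*}$ combined with $S\subseteq T$ forces $t\in S^{*}$ (no element of $T$, hence of $S$, can lie strictly above $t$). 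Then, invoking Definition \ref{def:inco} and the hypothesis that $\alpha$ is consistent on $t$, the triple $S\subseteq T$ with $\alpha(T)=t\in S^{*}$ cannot lead to $\alpha(S)\neq t$, so $\alpha(S)=t$.

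With this identification in hand, I would finish by computing, using \eqref{def:selectorvalue} and \eqref{game_linear_coord},
\[
\delta_t^{\alpha}(v) \;=\; \sum_{\substack{S\in\calD \\ \alpha(S)=t}} \hat{v}(S) \;=\; \sum_{\substack{S\in\calD \\ t\in S\subseteq T}} \hat{v}(S) \;=\; \sum_{\substack{R\in\calD \\ R\subseteq T}} \hat{v}(R) - \sum_{\substack{R\in\calD \\ R\subseteq T\setminus t}} \hat{v}(R) \;=\; v(T)-v(T\setminus t),
\]
where in the third equality I use that the down-sets $R\subseteq T$ split into those containing $t$ and those contained in $T\setminus t$.

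The only real subtlety is in correctly parsing the hypothesis and in the backward direction of the set equality: one must be careful to verify that $t$ is indeed maximal in $S$ before one is entitled to apply consistency on $t$ to the pair $(S,T)$. Everything else is either a direct consequence of the selector/down-set definitions already established in Section \ref{sec:coal} or a routine Möbius manipulation.
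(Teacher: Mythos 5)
Your proof is correct and follows essentially the same route as the paper: you identify $\{S\in\calD \mid \alpha(S)=t\}$ with $\{S\in\calD \mid t\in S\subseteq T\}$ via consistency on $t$ (a step the paper leaves implicit) and then conclude by the same M\"obius telescoping using \eqref{game_linear_coord}, merely organizing the final sum splitting slightly differently than the paper's expansion of $\hat{v}(T)$.
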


\begin{proof}
By the consistency of $\alpha$ on $t$, we obtain:
\begin{align*}
    \delta_t^{\alpha}(v) & = \sum_{\substack{S \in \calD \\ t=\alpha(S)}} \hat{v}(S) = \hat{v}(T) +  \sum_{\substack{S \in \calD \\ t\in S \subset T}} \hat{v}(S)=v(T) - \sum_{\substack{S \in \calD \\ S \subset T}} \hat{v}(S) + \sum_{\substack{S \in \calD \\ t\in S \subset T}} \hat{v}(S) \\
    & = v(T) - \sum_{\substack{S \in \calD \\ t\notin S \subset T}} \hat{v}(S) = v(T) - \sum_{\substack{S \in \calD \\ S \subseteq T \setminus t}} \hat{v}(S) =
    v(T) - v(T \setminus t).\qedhere
\end{align*}
\end{proof}
\noindent
Observe that any consistent selector $\alpha$ necessarily satisfies the assumption of lemma above since  $T=\bigcup\{D \in \calD \mid \alpha(D) = t\}$. Also, Lemma~\ref{consistent} implies that $\delta_t^{\alpha}(v) \geq 0$ for any selector $\alpha$ satisfying the assumption and for any monotone game $v\in\sfG$.

The next example presents a positive sharing value with $p$ positive on some inconsistent selectors.

\begin{example}\label{positivesharing}
    We consider the boolean algebra $P$ of rank $3$ as in Figure \ref{FigBA8}. Consider arbitrary selectors $\alpha,\beta\in\calS(\calD)$ such that:
\begin{enumerate}
    \item $\alpha$ is consistent and satisfies $\alpha(\langle a,b,c\rangle  )=c$, $\alpha(\langle a,b\rangle  )=a$, $\alpha(\langle b,e\rangle  )=e$, $\alpha(\langle e,f\rangle  )=f$, $\alpha(\langle d,e,f\rangle  )=d$.
    \item $\beta$ satisfies $\beta(\langle a,b,c\rangle  ) = \beta(\langle a,c\rangle  )=a$, $\beta(\langle a,b\rangle  ) = b$, $\beta(\langle b,c\rangle  ) = c$, and $\beta$~coincides with $\alpha$ on all the other down-sets. Note that $\beta$ is inconsistent. 
\end{enumerate}
Let $p$ be an arbitrary probability distribution over selectors such that $p(\beta) > 0$, $p(\alpha) = 1 - p(\beta)$, and the probability of any other selector is zero. Note that  $\delta_i^{\alpha}(v),\delta_i^{\beta}(v) \ge 0$ for  $i \in\{ d, e, f\}$. Indeed, both selectors are consistent on those elements and they satisfy the hypothesis of Lemma \ref{consistent}, hence the conclusion. We shall proceed with the calculations of $\delta_i^{\beta}(v)$ for $i\in \{a, b, c\}$ and a monotone game $v\in\sfG$:
\begin{align*}
    \delta_a^{\beta}(v) & = v(\langle a,b,c\rangle  ) - v(\langle a,b\rangle  ) - v(\langle b,c\rangle  ) + v(\langle b\rangle  ) + v(\langle a\rangle  ), \\
    \delta_b^{\beta}(v) & = v(\langle a,b\rangle  ) - v(\langle a\rangle  )\ge 0, \\
    \delta_c^{\beta}(v)& = v(\langle b,c\rangle  ) - v(\langle b\rangle  )\ge 0.
\end{align*}
The only interesting case is the sharing value \eqref{eq:selq} of player $a$: 
\[
\pi_a^{q_p}(v)= p(\beta)\cdot  \delta_a^{\beta}(v) + (1 -p(\beta))\cdot  \delta_a^{\alpha}(v),
\]
where $\delta_a^{\alpha}(v)=v(\langle a,b\rangle  ) - v(\langle b\rangle  )$. Note that if $v(\langle a,b\rangle  ) = v(\langle b\rangle )$, then the sharing value is automatically positive. 
%First of all, note that if $v(\langle a,b\rangle  ) - v(\langle b\rangle  ) = 0$ we shall have $\pi_a^q(v) \geq 0$.
Therefore, assume that $v(\langle a,b\rangle  ) - v(\langle b\rangle )$ is strictly positive. Then $\delta_a^{\beta}(v)\ge v(\langle b\rangle )-v(\langle a,b\rangle  ) $, which gives 
\begin{align*}
\pi_a^{q_p}(v) & \ge p(\beta)\cdot (v(\langle b\rangle )-v(\langle a,b\rangle  )) + (1-p(\beta))\cdot (v(\langle a,b\rangle  ) - v(\langle b\rangle )) \\
& = (1-2p(\beta))\cdot (v(\langle a,b\rangle  ) - v(\langle b\rangle ))
\end{align*}
In particular, the sharing value $\pi_a^{q_p}(v)$ is positive whenever $p(\beta) < \tfrac{1}{2}$.
\end{example}
We have already seen (Example \ref{ex:nonpos}) that a sharing value may fail to be positive. This result provides another perspective at the lack of positivity.
\begin{proposition}\label{ex:svneg}
    Let $p$ be a probability distribution such that $p(\alpha)>\tfrac 12$ for a unique inconsistent selector $\alpha$.  Then there exists a monotone game $v$ such that $\pi_t^{q_p}(v)< 0$ for some player $t$.
\end{proposition}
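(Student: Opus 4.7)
The plan is to adapt the construction in Example \ref{positivesharing} to an arbitrary inconsistent $\alpha$. By Definition \ref{def:inco}, inconsistency of $\alpha$ delivers a player $t$ together with down-sets $S \subsetneq T$ such that $t \in S^*$, $\alpha(T) = t$, and $\alpha(S) \neq t$; this $t$ will be the coordinate where the sharing value goes negative.

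Next, I would build a monotone game whose Harsanyi dividends are concentrated on exactly three down-sets. Set $T^\perp := \langle T^* \setminus \{t\}\rangle$, the down-set generated by the maximal elements of $T$ other than $t$. Since $t \in S$ forces $\langle t\rangle \subseteq S$, and since $T = \langle t\rangle \cup T^\perp$, one obtains $S \cup T^\perp = T$; moreover $T^\perp \neq \emptyset$ (otherwise $T = \langle t\rangle$ would force $S = T$, contradicting $S \subsetneq T$), $t \notin T^\perp$, and $S, T^\perp, T$ are three distinct nonempty down-sets. Define
$$v = u_S + u_{T^\perp} - u_T.$$
A short case analysis on whether $R$ contains $S$ or $T^\perp$---using $S \cup T^\perp = T$ to rule out the case where both containments hold but $R \not\supseteq T$---shows that $v(R) = 1$ precisely when $R \supseteq S$ or $R \supseteq T^\perp$, so $v$ is a monotone $\{0,1\}$-valued game.

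Finally, I would evaluate the selector contributions. By \eqref{Harsanyi_unanimity}, $\hat{v}$ is supported on $\{S, T^\perp, T\}$ with values $1$, $1$, $-1$ respectively, and since $t \notin T^\perp$ no selector can satisfy $\alpha'(T^\perp) = t$. Putting $A = \{\alpha' \in \calS(\calD) : \alpha'(S)=t,\, \alpha'(T) \neq t\}$ and $B = \{\alpha' \in \calS(\calD) : \alpha'(S) \neq t,\, \alpha'(T)=t\}$, one sees that $\delta_t^{\alpha'}(v) = +1$ on $A$, $-1$ on $B$, and $0$ elsewhere. The sets $A$ and $B$ are disjoint, and the chosen $\alpha$ lies in $B$, so $p(B) \geq p(\alpha) > \tfrac{1}{2}$ while $p(A) \leq 1 - p(B) < p(B)$. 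Lemma \ref{pro:sharing}(2) then yields
$$\pi_t^{q_p}(v) = \sum_{\alpha' \in \calS(\calD)} p(\alpha')\,\delta_t^{\alpha'}(v) = p(A) - p(B) < 0.$$
The main difficulty is pinpointing the auxiliary down-set $T^\perp$ and checking monotonicity of $u_S + u_{T^\perp} - u_T$ in the general setting; once this is done, the sign of $\pi_t^{q_p}(v)$ follows purely from $p(\alpha) > 1/2$ and the disjointness of $A$ and $B$.
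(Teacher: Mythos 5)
Your proof is correct, and it takes a genuinely different and in fact cleaner route than the paper's. The paper first normalizes the witness of inconsistency (iterating to a maximal $T$ with $\alpha(T)=t$), expands $\delta_t^{\alpha}(v)$ as $v(T)-v(T\setminus t)-\sum\hat{v}(S)$ over the down-sets $S\subset T$ with $t\in S$, $\alpha(S)\neq t$, and then posits a monotone game satisfying a list of vanishing/positivity constraints on its values and marginal differences, comparing against consistent selectors via Lemma~\ref{consistent}. You instead take a single witness $(S,T,t)$ of inconsistency straight from Definition~\ref{def:inco}, and exhibit the explicit game $v=u_S+u_{T^\perp}-u_T$ with $T^\perp=\langle T^*\setminus\{t\}\rangle$; your verification that $S\cup T^\perp=T$, that the three sets are distinct and nonempty, and that $v$ is the $\{0,1\}$-indicator of the up-set $\{R:\ R\supseteq S\ \text{or}\ R\supseteq T^\perp\}$ (hence monotone) is complete and correct. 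Since $\hat{v}$ is supported on $\{S,T^\perp,T\}$ with dividends $1,1,-1$ and $t\notin T^\perp$, the contribution $\delta_t^{\alpha'}(v)$ collapses to $[\alpha'(S)=t]-[\alpha'(T)=t]\in\{-1,0,1\}$, and $\pi_t^{q_p}(v)=p(A)-p(B)<0$ follows from $\alpha\in B$ and $p(\alpha)>\tfrac12$ alone. What your approach buys: the dividend bookkeeping is entirely explicit (no existence claim for a game satisfying side constraints), Lemma~\ref{consistent} is not needed, and you never use the hypothesis that $\alpha$ is the \emph{unique} inconsistent selector in the support of $p$ --- so you actually establish the slightly stronger statement that $p(\alpha)>\tfrac12$ for \emph{some} inconsistent $\alpha$ already forces non-positivity. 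The paper's argument, by contrast, gives finer quantitative information (the bound $p'-(1-p')k$ in terms of the number $k$ of maximal ``bad'' down-sets), which is relevant to the conjecture stated afterwards.
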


\begin{proof}
Let $\alpha$ be a selector inconsistent on $t\in P$. We can assume that there exists $T\in\calD$ with $\alpha(T) = t$ and for any $S\in\calD$ such that $S \subseteq T$, we have $\alpha(S) = t$. To see this, consider the down-set $T'=\bigcup\{D \in \calD \mid \alpha(D) = t\}$. If $\alpha(T')=t$, we can take $T'$ as $T$, otherwise $\alpha$ is inconsistent also on $t'=\alpha(T')$ and we can iterate the argument considering the down-set $\bigcup\{D' \in \calD \mid \alpha(D') = t'\}$. With $T$ and $t$ as above and with the convention that the sum are over $S\in\calD$, we get

\begin{align*}
    \delta_t^{\alpha}(v) & = \sum_{\substack{t=\alpha(S)}} \hat{v}(S) = \hat{v}(T) +  \sum_{\substack{S \subset T \\ \alpha(S)=t}} \hat{v}(S)=v(T) - \sum_{\substack{S \subset T}} \hat{v}(S) + \sum_{\substack{S \subset T \\ \alpha(S)=t}} \hat{v}(S)  \\
    & = v(T) - \sum_{\substack{S \subset T \\ \alpha(S) \neq t}} \hat{v}(S) =
 v(T) - \sum_{\substack{S \subseteq T \setminus t}} \hat{v}(S) -  \sum_{\substack{t\in S \subset T \\ \alpha(S) \neq t}} \hat{v}(S) \\ & =
 v(T) - v(T \setminus t) - \sum_{\substack{t\in S \subset T \\ \alpha(S) \neq t}} \hat{v}(S).
\end{align*}
Let $S_1, \dots, S_k$ be the maximal elements of $\{S \in \calD \mid S \subset T,\, t \in S, \, \alpha(S) \neq t\}$ and consider any monotone game $v$ which satisfies the following assumptions:

\begin{itemize}
\item $v(D) = 0$ for every $D \subset S_i$ for some $i \in \{1, \dots, k\}$.
\item $v (D) - v(D \setminus t)=0$ for every $D$ such that $t \in D^*$ and $D \notin \{S_1, \dots, S_k\}$.
\item $v(S_i)>0$ for every $i \in \{1, \dots, k\}$. 
\end{itemize}
Then $\delta_t^{\alpha}(v) = -kv(S_i)$ and, by Lemma \ref{consistent}, $\delta_t^{\beta}(v)=0$ or  $\delta_t^{\beta}(v)=v(S_i)$ for any consistent selector $\beta$. Note that, given the consistent selectors $\beta_1, \dots, \beta_n$ whose probability is positive, we have $p(\alpha) > \sum_{i=1}^{n} p(\beta_i) \eqqcolon p'$. In either case, $\pi_t^{q_p}(v) \leq p'v(S_i) - (1-p')kv(S_i) = v(S_i)(p' - (1-p')k)$, which is strictly negative since $p' < (1-p')k$. 
\end{proof}

On the one hand, Example \ref{positivesharing} shows that a sharing value \eqref{eq:selq} may be positive although there exists some inconsistent selector with a strictly positive probability. On the other, Proposition \ref{ex:svneg} shows that in case of a unique inconsistent selector with positive probability greater than $\tfrac 12$, there exists a monotone game $v$ for which the corresponding sharing value is not positive.

The question whether it is possible to characterize positive sharing values \eqref{eq:selq} by the probability distributions $p$ over selectors remains open for further research. In particular, the results above lead to the following conjecture:
 If a sharing value \eqref{eq:selq} is positive, then  $$\sum_{\alpha \in\calS(\calD)\setminus \calC(\calD)} p(\alpha) \le  \sum_{\alpha \in\calC(\calD)} p(\alpha).$$ In summary, we have obtained the strict inclusions
\begin{center}
random-order values $\subset$ positive sharing values $\subset$ sharing values. 
\end{center}

By Proposition \ref{pro:PSV}, Proposition \ref{random sharing}, and Example \ref{harsanyinull}, we know that random-order values are strictly contained in the family of values satisfying Linearity, Positivity, and Null player axiom. Such values have necessarily the form \eqref{eq:valform}.

\begin{proposition}
Let $\phi$ be a value that satisfies Linearity, Positivity, and Null player axiom. Then 

\begin{equation}\label{eq:valform}
\phi_i (v)= \sum_{\substack{S \in \mathcal{D} \\ i \in S^*}} \beta_i(S)\cdot (v(S) - v(S \setminus i)), \qquad  v\in\sfG,\; i\in P,
\end{equation}
for some real numbers $\beta_i(S)\ge 0$. 
\end{proposition}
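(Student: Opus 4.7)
The plan has two stages. In the first, I will use Linearity together with Null player axiom to force $\phi_i$ into the required form $\sum_{S\colon i\in S^*}\beta_i(S)\cdot(v(S)-v(S\setminus i))$ with real coefficients $\beta_i(S)$. In the second, I will apply Positivity to carefully engineered monotone test games to show each $\beta_i(S)$ is nonnegative.

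For the first stage, I expand $v$ in the unanimity basis via \eqref{game_linear}, so Linearity gives $\phi_i(v)=\sum_{\emptyset\neq T\in\calD}\hat v(T)\,\phi_i(u_T)$. Next I verify that $i$ is null in $u_T$ for every $T$ with $i\notin T^*$: the excerpt notes this when $i\notin T$, and when $i\in T\setminus T^*$ one picks $j\in T$ with $i\prec j$ and checks that if $T\subseteq S\cup i$ but $T\not\subseteq S$, then $j\in S\cup i$ with $j\neq i$ gives $j\in S$, whence $i\prec j\in S$ and down-set closure yield $i\in S$, a contradiction. Null player axiom then kills $\phi_i(u_T)$ for all such $T$. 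Applying the iterative decomposition from Remark \ref{decomposition} to each surviving $\hat v(T)$ (those with $i\in T^*$) to rewrite it as a linear combination of marginal contributions $v(S)-v(S\setminus i)$ with $S\subseteq T$ and $i\in S^*$, and then swapping the order of summation, produces the representation \eqref{eq:valform} with some $\beta_i(S)\in\R$.

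The second stage is the main obstacle. For each fixed $T_0\in\calD$ with $i\in T_0^*$, I construct a monotone test game $v_{T_0}$ that isolates the single marginal contribution $v(T_0)-v(T_0\setminus i)$. Let
\[
\calT_{T_0}=\{T_0\}\cup\{S\setminus i\mid S\in\calD,\; S\supsetneq T_0,\; i\in S^*\}
\]
and define $v_{T_0}(R)=1$ if $R\supseteq R'$ for some $R'\in\calT_{T_0}$, and $v_{T_0}(R)=0$ otherwise. As the indicator of an up-set, $v_{T_0}$ is monotone. A case analysis shows that $v_{T_0}(S)-v_{T_0}(S\setminus i)$ equals $1$ when $S=T_0$ and $0$ for every other $S$ with $i\in S^*$: the cases $S=T_0$ and $S\supsetneq T_0$ follow directly from the definition of $\calT_{T_0}$, while for $S\not\supseteq T_0$ the key observation is that $i\in S$ together with down-set closure forces $S\not\supseteq T_0\setminus i$, so $S$ contains no element of $\calT_{T_0}$ and both values are $0$. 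Substituting into the first-stage formula yields $\phi_i(v_{T_0})=\beta_i(T_0)$, and Positivity forces $\beta_i(T_0)\geq 0$. The delicate point is the design of $\calT_{T_0}$: the naive choice $u_{T_0}$ fails because every $S\supsetneq T_0$ with $i\in T_0\cap S^*$ produces an unwanted nonzero marginal, but augmenting the generating family by each $S\setminus i$ exactly cancels these while, crucially, the constraint $i\in S$ plus down-set closure prevents any $S$ incomparable with or below $T_0$ from accidentally containing one of the new generators.
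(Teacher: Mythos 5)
Your proof is correct and follows the same route as the paper: expand $v$ in the unanimity basis, use Linearity and the Null player axiom to discard the terms with $i\notin T^*$, apply the iterative decomposition of Remark \ref{decomposition} to reach the form \eqref{eq:valform}, and then feed Positivity a monotone game that isolates a single marginal contribution $v(T_0)-v(T_0\setminus i)$. The one point where you go beyond the paper is the second stage: the paper merely asserts the existence of a (monotone) game $w$ with $w(S)-w(S\setminus i)=0$ for all $S\neq T_0$ and $w(T_0)-w(T_0\setminus i)>0$, whereas you exhibit it explicitly as the indicator of the up-set generated by $\calT_{T_0}$ and verify the cancellations, which is a genuine and welcome completion of the argument since the naive choice $u_{T_0}$ does not work.
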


\begin{proof}
Linearity of $\phi$ and \eqref{game_linear} yield
 $$\phi_i (v) = \sum_{\emptyset \neq S \in \mathcal{D}} \hat{v} (S) \cdot \phi_i (u_S).$$ Since any $i \notin S^*$ is a null player in $u_S$, the sum above becomes $$\phi_i (v) = \sum_{\substack{S \in \mathcal{D} \\ i \in S^*}} \hat{v}(S)\cdot \phi_i (u_S).$$
By Remark \ref{decomposition}, $\phi_i (v)$ is equal to

\[
\sum_{\substack{S \in \mathcal{D} \\ i \in S^*}} \phi_i (u_S) \left( (v(S) - v(S \setminus i)) - \sum_{\substack{T \subset S \\ i \in T^*}} (v(T) - v(T \setminus i)) + \sum_{\substack{T \subset S \\ i \in T^*}} \sum_{\substack{U \subset T \\ i \in U^*}} (v(U) - v(U \setminus i)) - \dots \right)  
\]
Consequently, we can write
\[
\phi_i (v) = \sum_{\substack{S \in \mathcal{D} \\ i \in S^*}} \beta_i(S)\cdot  (v(S) - v(S \setminus i))
\]
for some real numbers $\beta_i(S)$. By way of contradiction, suppose that there exist $i\in P$ and $T\in\calD$ such that $\beta_i(T) < 0$. Then we can find a game $w$ such that $w(S) - w(S \setminus i) = 0$ for every $S \neq T$ and $w(T) - w(T \setminus i) > 0$. However, $\phi_i (w) < 0$, which is a contradiction. 
\end{proof}

\subsection{Hierarchical value} \label{sec:hie}
The hierarchical value was proposed by Faigle and Kern in \cite{FaigleKern92} for coalitional games in which the player set is any partially ordered set and the feasible coalitions are down-sets of the player set as in our setting. The hierarchical value is based on counting the rankings in which a player scores highest among the players in a given coalition. Specifically, for each $S\in\calD$ and every $i\in S$, define
$$
e_S(i)=|\{f\in\calL(P)\mid f(i)>f(j) \text{ for all $j\in S$}\}|
$$
and
\begin{equation}\label{def:hs}
 h_S(i)=\frac{e_S(i)}{|\calL(P)|}.
\end{equation}
The ratio $h_S(i)$ is called the \emph{hierarchical strength} of player $i$ in $S$. Observe that $h_S(i)> 0$ if, and only if, the player $i$ is a maximal element of $S$. Moreover,
$$
\sum_{i\in S} h_S(i) = \frac{1}{|\calL(P)|}\sum_{i\in S} e_S(i) =\frac{1}{|\calL(P)|}\cdot |\calL(P)|= 1,
$$
since every $f\in\calL(P)$ is maximized over $S$ at some element $i\in S^*$, and any two linear extensions attaining their maxima at distinct elements of $S^*$ are necessarily different. We define the hierarchical value as the sharing value \eqref{def:sharingv} where the corresponding sharing system $q$ is 
$$
q(S,i)= \begin{cases}
    h_S(i) & i\in S^*, \\
    0 & i\notin S^*, 
\end{cases}
\qquad S\in\calD,\; i\in P.
$$

\begin{definition}[\cite{FaigleKern92}]\label{def:HV}
The \emph{hierarchical value} is the sharing value $\psi\colon\sfG\to \R^P$ defined by
$$
\psi_i(v)=\sum_{\substack{S\in\calD\\ i \in S^*}}  h_S(i)\cdot  \hat{v}(S), \qquad i\in P, \; v\in\sfG.
$$
\end{definition}
\noindent
For example, the hierarchical value of players in a unanimity game $u_T$ is
\begin{equation}\label{hie_unanimity}
    \psi_i(u_T)=\begin{cases}
        h_S(i) & i\in T^*,\\
        0 & i\in P\setminus T^*.
    \end{cases}
\end{equation}
By \cite[Theorem 1]{FaigleKern92}, the hierarchical value $\psi$ is the only value satisfying Efficiency, Linearity, Null player axiom, and Hierarchical strength axiom, which is defined as follows.
\begin{description}
    \item[Hierarchical Strength] Let $\phi\colon \sfG\to\R^P$ be a value. For any nonempty $S\in\calD$ and players $i,j\in S$, $$h_S(i)\cdot \phi_j(u_S)=h_S(j)\cdot \phi_i(u_S).$$ 
\end{description}
It can be shown that the uniqueness of hierachical value $\psi$ no more holds if Hierarchical Strength is replaced by the (weaker) Symmetry axiom.

It might not be immediately clear from Definition \ref{def:HV} that $\psi$ is a positive value. Positivity follows directly from the representation of $\psi$ as the random-order value in sense of \eqref{eq:RO}:

\begin{equation}\label{HV:ROV}
    \psi_i(v) =\frac{1}{|\calL(P)|} \sum_{f\in\calL(P)}  \Delta_i^f(v), \qquad i\in P, \; v\in\sfG.
    \end{equation}

Now, consider any set $S\subseteq P$ with the partial order $\preceq$ of $P$ restricted to~$S$. 
We define $e_S=|\{f \mid \text{$f$ is a linear extension of $S$}\}|$. Note that $e_P=|\calL(P)|$.  The hierachical value can also be expressed as the average of marginal contributions of player $i$ to every feasible coalition $S$ in which $i$ is maximal:
\begin{equation}\label{eq:PV}
    \psi_i(v) = \sum_{\substack{S\in\calD\\ i \in S^*}} \frac{e_{S\setminus i}\cdot e_{P\setminus S}}{e_P}\cdot (v(S)-v(S\setminus i)).
\end{equation}
\begin{remark}\label{rem:PVs}
    The formula \eqref{eq:PV} shows that the hierarchical value $\psi$ can be viewed as the so-called \emph{probabilistic value}. This family of values was extensively studied in the standard model of coalitional games \cite{Weber88}. In that context, the class of probabilistic values was characterized as the family of values satisfying Linearity, Positivity, and Dummy player axiom \cite[Theorem 5]{Weber88}. However, a probabilistic value lacks Efficiency. The class of efficient probabilistic values, the so-called \emph{quasivalues}, coincides with random-order values in the standard model; see \cite[Theorem 5]{Weber88}.
\end{remark}

\subsection{Other value concepts} \label{sec:other}
There are many different approaches to the definition of coalitional games, feasible coalitions, and values for the games in which players form a hierarchy or precedence structure. We mention here some of them briefly, without claiming completeness. For further details, see the survey \cite{Algaba19} or the discussion in \cite{Beal21}.

The \emph{hierarchical solution} is introduced in \cite{Algaba17}. The hierachical solution is defined as a certain average of Harsanyi dividends. However, this value concept is not a sharing value in the sense of \eqref{def:sharingv}, since the averaging goes over all the coalititons to which the player belongs and not only over the coalitions where the player is maximal. The main differences of hierarchical solution and the hierarchical value of Faigle and Kern \cite{FaigleKern92} are pointed out in \cite{Algaba17}.

Another class of values is constructed for the games in which every coalition of players  is feasible, so that the coalitional game is defined for all subsets $S\subseteq P$.  This makes the distinction  between such games and the setting considered in this paper. The case in point are coalitional games with permission structure and \emph{permission value} studied in \cite{Brink96}.  It was pointed out that the permission value is fundamentally different from the hierarchical value, both numerically and conceptually.

In a related stream of research, Béal et al. \cite{Beal21} recently introduced the \emph{priority value} for coalitional games with the priority structure, which is given by any partial order on the player set $P$. The priority value can be axiomatized. We can formally introduce the priority value in our setting as the sharing value where each Harsanyi divident is split uniformly among the maximal players. However, the resulting sharing value is not positive --- see Example \ref{ex:nonpos}.

\section{Conclusions}\label{sec:concl}
The present paper initiates the study of values in the special game-theoretic setting motivated by the information decomposition problem. We provided a common framework for different solution concepts in case the player set forms a boolean algebra. In particular, we focus on the class of sharing values, which are efficient, but not necessarily positive. Proposition 2 characterizes positive sharing values. Random-order values are described as average selector values by the admissible distributions over the selectors (Proposition 3). We also identify the necessary form for any value satisfying Linearity, Positivity, and Null player axiom (Proposition 5).

We will briefly mention several items for further research. Instead of studying the positivity of sharing values, which are efficient, it is possible to introduce probabilistic values, which are necessarily positive, and try to characterize when the latter are efficient; see also Remark \ref{rem:PVs}.
Note that many results in this paper do not depend on the assumption that the player set is a boolean algebra, and any partial order on the player set can be considered instead. We leave this more general framework for future investigation. The complexity results about enumerating linear extensions indicate that an efficient algorithm to compute the hierachical value or other random-order values cannot be ever found; see also the remark in \cite[p. 260]{FaigleKern92}. Therefore it seems inevitable to focus on the numerical methods to approximate the values using sampling techniques, similar to the existing methods for the classical Shapley value.

\section*{Acknowledgements}
This work has been supported from the GA\v{C}R grant project GA21-17211S and from the project RCI (CZ.02.1.01/0.0/0.0/16\_019/0000765).

\appendix
\section{Coalitional games for the $4$-player boolean algebra}\label{a:4}
Let the player set be $P=\{\bot,a,b,\top\}$. The boolean algebra $(P,\preceq)$ has atoms $a$ and $b$ so its rank is $2$. There are only $2$ linear extensions of $(P,\preceq)$. The lattice of feasible coalitions $\calD$ is on the right-hand side of Figure \ref{FigBA4}. We recall that the notation $\langle a,b \rangle$ denotes the down-set $\{a,b,\bot\}$ whose maximal elements are $a$ and $b$. Consider a coalitional game $v$ on $\calD$ with $v(\langle \bot \rangle)=0$. We write  briefly $v\braket{a,b}$ in place of $v(\braket{a,b})$ to denote the values of $v$. The corresponding Harsanyi dividends are in Table \ref{tab:H4}.
\begin{figure}[h]
    \begin{center}
\begin{tikzcd}[tips=false,column sep=1em,row sep=1.5em]
    & \top \ar{dl} \ar{dr} & \\
    a\ar{dr} & & b\ar{dl} \\
    & \bot \\
\end{tikzcd}
\hspace*{1cm}
\begin{tikzcd}[tips=false,column sep=1em,row sep=1.5em]
   & {P} \ar{d} & \\
   & {\langle a,b\rangle} \ar{dl} \ar{dr}  \\
   {\langle a \rangle} \ar{dr} && {\langle b\rangle} \ar{dl} \\
   & {\langle \bot\rangle} \ar{d} \\
   & {\emptyset} 
\end{tikzcd}
\caption{The $4$-player boolean algebra $(P,\preceq)$ and the lattice of its down-sets $\calD(P,\preceq)$}
\label{FigBA4}
\end{center}
\end{figure}
\begin{table}[h]
    \centering
    \begin{tabular}{|c|c|} \hline
        \emph{Coalition} $S$  & $\hat{v}(S)$ \\ \hline
        $\langle \bot\rangle$ & $0$ \\ 
        $\langle a \rangle$ & $v\braket{a}$ \\
        $\langle b \rangle$ & $v\braket{b}$ \\
        $\langle a,b \rangle$ & $v\braket{a,b}-v\braket{a}-v\braket{b}$ \\
        $P$ & $v(P)-v\braket{a,b}$ \\\hline
    \end{tabular}
    \caption{Harsanyi dividends for a $4$-player game $v$}
    \label{tab:H4}
\end{table}

\section{Coalitional games for the $8$-player boolean algebra}\label{a:8}
The $8$-player boolean algebra $(P,\preceq)$ and its lattice of feasible coalitions $\calD$ are depicted in Figure \ref{FigBA8}. There are $48$ linear extensions of $(P,\preceq)$. We consider a game $v$ over $\calD$ with $v(\braket{\bot})=0$. Table \ref{tab:H8} shows the Harsanyi dividends where we omit the commas in expressions such as $\langle a,b,c \rangle$ for brevity.
\begin{figure}[ht]
    \begin{center}
\begin{tikzcd}[tips=false,column sep=1em,row sep=1.5em]
    & \top \ar{dl} \ar{d} \ar{dr} & \\
    d \ar{d} \ar{dr} & e \ar{dl} \ar{dr} & f \ar{dl} \ar{d} \\
    a\ar{dr} & b\ar{d} & c\ar{dl} \\
    & \bot \\
\end{tikzcd}
\hspace*{1cm}
\begin{tikzcd}[tips=false,column sep=3em,row sep=1.5em]
    & { P } \ar{d} & \\
    & {\langle d,e,f\rangle } \ar{dl} \ar{d} \ar{dr} & \\
    {\langle d,e\rangle } \ar{d} \ar{dr} & {\langle d,f\rangle } \ar{dl} \ar{dr} & {\langle e,f\rangle } \ar{dl} \ar{d} \\
    {\langle c,d\rangle }\ar{dr}\ar{dd} & {\langle b,e\rangle }\ar{d}\ar[bend left=70]{dd} & {\langle a,f\rangle }\ar{dl}\ar{dd} \\
    & {\langle a,b,c\rangle }\ar[bend right=37]{dd}   \ar{ddl} \ar{ddr} & \\
    {\langle d\rangle } \ar{d} & {\langle e\rangle } \ar{d} & {\langle f\rangle } \ar{d} \\
    {\langle a,b\rangle } \ar{d} \ar{dr} & {\langle a,c\rangle } \ar{dl} \ar{dr} & {\langle b,c\rangle } \ar{dl} \ar{d} \\
    {\langle a\rangle }\ar{dr} & {\langle b\rangle }\ar{d} & {\langle c\rangle }\ar{dl} \\
    & {\langle \bot\rangle} \ar{d} \\
    & {\emptyset} 
\end{tikzcd}
\caption{The $8$-player boolean algebra $(P,\preceq)$ and the lattice of its down-sets $\calD(P,\preceq)$}
\label{FigBA8}
\end{center}
\end{figure}
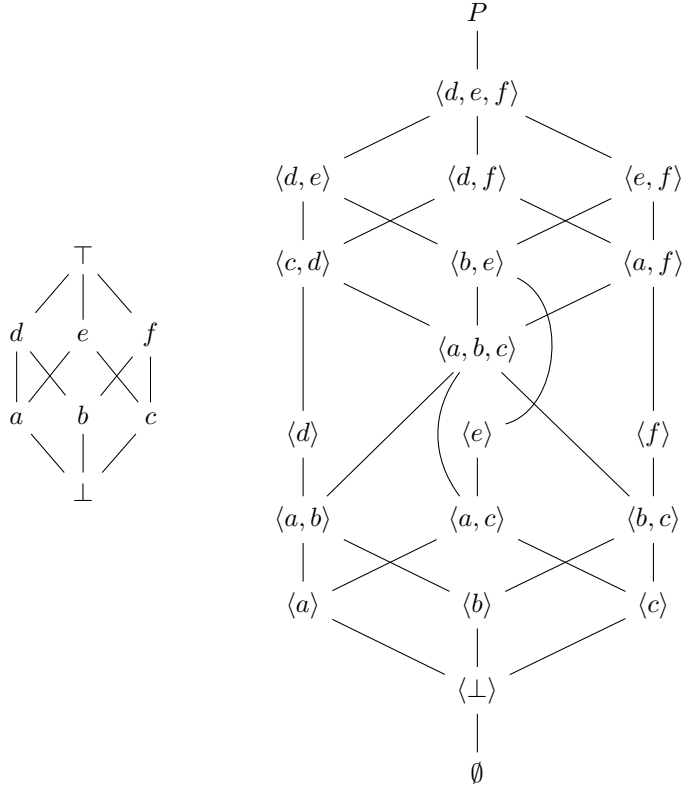

\begin{table}[ht]
    \centering
    \begin{tabular}{|c|c|} \hline
        \emph{Coalition} $S$ & $\hat{v}(S)$ \\ \hline
        $\braket{\bot}$ & $0$ \\ \hline
        $\langle a \rangle$ & $v\braket{a}$ \\
        $\langle b \rangle$ & $v\braket{b}$ \\
        $\langle c \rangle$ & $v\braket{c}$ \\ \hline
        $\langle ab \rangle$ & $v\braket{ab}-v\braket{a}-v\braket{b}$ \\
        $\langle ac \rangle$ & $v\braket{ac}-v\braket{a}-v\braket{c}$ \\
        $\langle bc \rangle$ & $v\braket{bc}-v\braket{b}-v\braket{c}$ \\\hline
        $\langle d \rangle$ & $v\braket{d}-v\braket{ab}$ \\
        $\langle e \rangle$ & $v\braket{e}-v\braket{ac}$ \\
        $\langle f \rangle$ & $v\braket{f}-v\braket{bc}$ \\
        $\langle abc \rangle$ & $v\braket{abc}-v\braket{ab}-v\braket{ac}-v\braket{bc}+v\braket{a}+v\braket{b}+v\braket{c}$ \\ \hline
        $\langle cd \rangle$ & $v\braket{cd}-v\braket{abc}-v\braket{d}+v\braket{ab}$ \\
        $\langle be \rangle$ & $v\braket{be}-v\braket{abc}-v\braket{e}+v\braket{ac}$ \\
        $\langle af \rangle$ & $v\braket{af}-v\braket{abc}-v\braket{f}+v\braket{bc}$ \\ \hline
        $\langle de \rangle$ & $v\braket{de}-v\braket{cd}-v\braket{be}+v\braket{abc}$ \\
        $\langle df \rangle$ & $v\braket{df}-v\braket{cd}-v\braket{af}+v\braket{abc}$ \\
        $\langle ef \rangle$ & $v\braket{ef}-v\braket{be}-v\braket{af}+v\braket{abc}$ \\ \hline
        $\langle def \rangle$ & $v\braket{def}-v\braket{de}-v\braket{df}-v\braket{ef}+v\braket{cd}+v\braket{be}+v\braket{af}-v\braket{abc}$ \\ \hline
        $\langle h \rangle$ & $v(P)-v\braket{def}$ \\ \hline
    \end{tabular}
    \caption{Harsanyi dividends for an $8$-player game $v$}
    \label{tab:H8}
\end{table}

\newpage
\section{Information Attribution} \label{a:IA}
Information Attribution \cite{AyPolaniVirgo20} is reproduced here for the reader's convenience. It works under the assumptions from Section \ref{sec:assum}. Namely we consider a discrete random vector $\bX=(X_1,\dots,X_n)$ of input random varaibles $X_i$, a discrete target random variable $Y$, and a known joint probability distribution $p$ of $(\bX,Y)$, the so-called true distribution. Let $V=\{1,\dots,n\}$ be the index set for input random variables. For a nonempty set of indices $A\subseteq V$, we will use the short notation
$$
\calX_A=\bigtimes_{i\in A} \calX_i.
$$
Analogously, an element of $\calX_A$ is denoted by $\bx_A=(x_i)_{i\in A}$ and the same notation is used for random vectors, $\bX_A=(X_i)_{i\in A}$. We also use short notations $\calX=\calX_V$, 
$
\bX=\bX_V,
$
and $\bx=(x_1,\dots,x_n)\in\calX$.

By $\Prob$ we denote be the set of all joint probability distributions $q$ of random vector $(\bX,Y)$, that is, the elements of $\Prob$ are the functions $q\colon \calX \times \calY \to [0,1]$ satisfying
$$
  \sum_{(\bx,y)\in\calX\times \calY} q(\bx,y)= 1.
$$
Let $\calP_V$ be the set of all subsets of $V$. Any element $A\in\calP_V$ is termed \emph{predictor} as it is associated with the set of input random variables $\{X_i\mid i\in A\}$. We will proceed with the definition of a coalitional game $v_p$ in the sense of Section \ref{sec:coal} induced by the true distribution $p$. The player set in this game is the powerset of all predictors, $\calP_V$, and the set of all feasible coalitions is the family of corresponding down-sets, $\calD=\calD(\calP_V,\subseteq)$. A coalition in this game is thus a down-set of predictors $\calS\in\calD$. 

The coalitional game $v_p$ depends on the optimal solutions to certain convex optimization problems. Given \emph{any} nonempty set of predictors $\calS\subseteq \calP_V$, consider the convex polytope $\Prob_{\calS}\subseteq \Prob$ of probability distributions $q$ whose marginals $q_{\bX}$ and $q_{\bX_AY}$ coincide with the corresponding marginals of the true distribution $p$, for all $A\in \calS$. Specifically, $\Prob_{\calS}$ is the set of all probability distributions $q\in\Prob$ such that
$$
q_{\bX}(\bx )=p_{\bX}(\bx )=\sum_{y\in\calY}p(\bx,y)
$$ for all $\bx \in\calX $ and
$$
q_{\bX_AY}(\bx_A,y)=p_{\bX_AY}(\bx_A,y)=\sum_{\bx_{\bar{A}}\in\calX_{\bar{A}}}p(\bx_A,\bx_{\bar{A}},y)
$$
 for all $A\in \calS$ and all $(\bx_A,y)\in\calX_A\times \calY$, where $\bar{A}=N\setminus A$. 
Note that if $\calS\subseteq \calT\subseteq \calP_V$, then $\Prob_{\calT} \subseteq \Prob_{\calS}$. Indeed, if $q\in\Prob_{\calT}$, then necessarily  $q_{\bX_AY}=p_{\bX_AY}$ for all $A\in \calS$.

Recall that, for any $\calS\subseteq \calP_V$, the set of maximal elements in $\calS$,
$$
\calS^* = \{A\in\calS \mid \text{ there is no $B\in\calS$ such that $A\subset B$}\},
$$
is an antichain. We claim that
\begin{equation}\label{eq:deltas}
    \Prob_{\calS}=\Prob_{\calS^*}.
\end{equation}
It follows immediately that $\Prob_{\calS}\subseteq \Prob_{\calS^*}$ by the inclusion $\calS^*\subseteq \calS$. Suppose that $q\in\Prob_{\calS^*}$ and let $B\in\calS$. Then there is necessarily some $A\in\calS^*$ such that $A\supseteq B$. Since $q_{\bX_A Y}=p_{\bX_A Y}$ by the hypothesis, the inclusion $A\supseteq B$ immediately implies that $q_{\bX_B Y}=p_{\bX_B Y}$. This proves $q\in \Prob_{\calS}$, so \eqref{eq:deltas} is true. 

 We recall that the \emph{(Shannon) entropy} is  
 $$
 H(q)=-\!\!\!\!\sum_{(\bx ,y)\in\calX \times \calY} q(\bx ,y)\log_2 q(\bx ,y), \qquad q\in\Prob,
 $$
 and the \emph{Kullback-Leibler divergence} (or the \emph{relative entropy}) is
 $$
 D(q_1 \parallel q_2) = \sum_{(\bx,y)\in\calX\times \calY} q_1(\bx,y) \log_2\frac{q_1(\bx,y)}{q_2(\bx,y)}, \qquad q_1,q_2\in\Prob.
 $$
The entropy functional $H\colon \Prob\to [0,\infty)$ is continuous and strictly concave. Therefore, for any down-set $\calS\in\calD$, the maximizer of $H$ over $\Prob_{\calS}$ exists and it is determined uniquely. The resulting probability distribution
 \begin{equation} \label{def:split}
 p^{\calS} = \argmax_{q\in\Prob_{\calS}}  H(q)
 \end{equation}
 is called the \emph{split distribution}. It follows from \eqref{eq:deltas} that
 \begin{equation}\label{splitequal}
     p^{\calS}=p^{\calA}
 \end{equation}
 for all $\calS,\calA\subseteq \calP_V$ such that $\calS^*=\calA$. The equality \eqref{splitequal} explains the use of down-sets $\calS$ (or antichains, equivalently) instead of arbitrary sets of predictors as coalitions.
 \begin{example}[The case of $n=3$ input variables]\label{ex:3}
    We assume $V=\{1,2,3\}$ and consider a random vector $(X_1,X_2,X_3,Y)$ whose true probability distribution is $p$. The set of predictors and the associated lattice of their down-sets are depicted in Figure \ref{FigP3}. 
    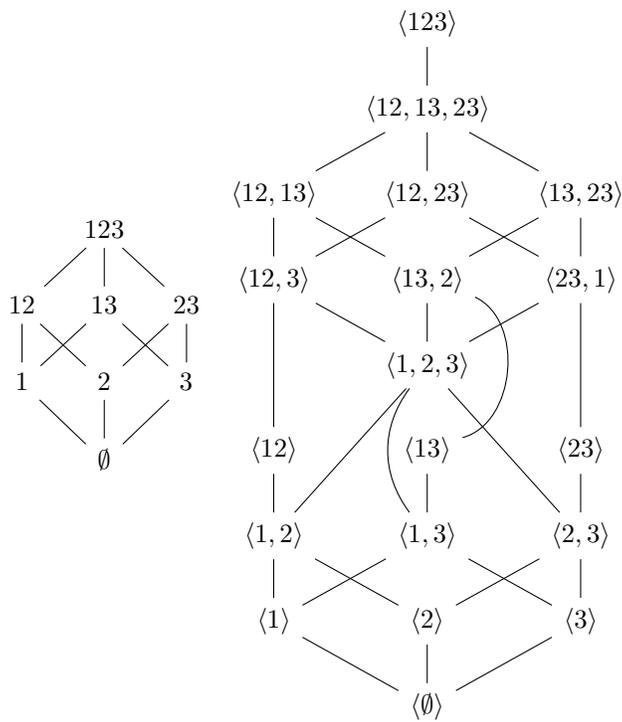
\begin{figure}
        \begin{center}
    \begin{tikzcd}[tips=false,column sep=1em,row sep=1.5em]
        & 123 \ar{dl} \ar{d} \ar{dr} & \\
        12 \ar{d} \ar{dr} & 13 \ar{dl} \ar{dr} & 23 \ar{dl} \ar{d} \\
        1\ar{dr} & 2\ar{d} & 3\ar{dl} \\
        & \emptyset \\
    \end{tikzcd}
    \begin{tikzcd}[tips=false,column sep=1em,row sep=1.5em]
        & {\langle 123\rangle } \ar{d} & \\
        & {\langle 12,13,23\rangle } \ar{dl} \ar{d} \ar{dr} & \\
        {\langle 12,13\rangle } \ar{d} \ar{dr} & {\langle 12,23\rangle } \ar{dl} \ar{dr} & {\langle 13,23\rangle } \ar{dl} \ar{d} \\
        {\langle 12,3\rangle }\ar{dr}\ar{dd} & {\langle 13,2\rangle }\ar{d}\ar[bend left=70]{dd} & {\langle 23,1\rangle }\ar{dl}\ar{dd} \\
        & {\langle 1,2,3\rangle }\ar[bend right=37]{dd}   \ar{ddl} \ar{ddr} & \\
        {\langle 12\rangle } \ar{d} & {\langle 13\rangle } \ar{d} & {\langle 23\rangle } \ar{d} \\
        {\langle 1,2\rangle } \ar{d} \ar{dr} & {\langle 1,3\rangle } \ar{dl} \ar{dr} & {\langle 2,3\rangle } \ar{dl} \ar{d} \\
        {\langle 1\rangle }\ar{dr} & {\langle 2\rangle }\ar{d} & {\langle 3\rangle }\ar{dl} \\
        & {\langle \emptyset\rangle }
    \end{tikzcd}
    \caption{The powerset of predictors and the lattice of down-sets for $3$ input variables.}
    \label{FigP3}
    \end{center}
    \end{figure}
    Let $\calS$ be the down-set $\langle 13,23 \rangle$. Then computing the split distribution \eqref{def:split} over $\Prob_{\langle 13,23 \rangle}$ amounts to solving the following linearly constrained convex optimization problem:
    $$
    \text{Minimize $\sum_{(x_1,x_2,x_3 ,y)\in\calX \times \calY} q(x_1,x_2,x_3 ,y)\log_2 q(x_1,x_2,x_3 ,y)$}
    $$
    subject to the constraints
    \begin{align*}
        q(x_1,x_2,x_3,y)&\geq 0 \qquad \forall x_1,x_2,x_3,y\\
        \sum_{x_1,x_2,x_3,y}  q(x_1,x_2,x_3,y) &= 1\\
        \sum_{y} q(x_1,x_2,x_3,y) &= \sum_{y} p(x_1,x_2,x_3,y) \qquad \forall x_1,x_2,x_3 \\
        \sum_{x_2} q(x_1,x_2,x_3,y) &= \sum_{x_2} p(x_1,x_2,x_3,y) \qquad \forall x_1,x_3,y \\
        \sum_{x_1} q(x_1,x_2,x_3,y) &= \sum_{x_1} p(x_1,x_2,x_3,y) \qquad \forall x_2,x_3,y
    \end{align*}

    \end{example}

    For the problems of smaller size such as the one in Example \ref{ex:3}, the computation of split distribution can be formulated as the optimization on the exponential cone in MOSEK solver. We discuss special cases in which the split distributions has closed-form expression.  In that follows, we frequently omit the curly brackets and the commas in order to use a more compact notation for predictors. For example, we can write $23$ in place of $\{2,3\}$. In the similar spirit $\langle 23,1 \rangle$ denotes the antichain $\{\{2,3\},\{1\}\}$.

\begin{enumerate}
    \item Let $\calS=\calP_V$. We get $\Prob_{\calP_V}=\{p\}$, so the split distribution is  the true distribution $p$ in this case.
    \item Let $\calS=\langle 1,\dots,n\rangle$. Then $\Prob_{\calS}$ contains precisely those $q$ such that $q_{\bX}=p_{\bX}$ and $q_{X_i Y}=p_{X_i Y}$ for all $i\in V$, and the split distribution is the product of one-dimensional marginals of $p$, $$p^{\calS}=p_{X_1}\dotsb p_{X_n}  p_Y.$$
    \item Let $\calS=\{\emptyset\}$. Then $\Prob_{\calS}$ contains precisely those $q$ such that $q_{\bX}=p_{\bX}$ and $q_Y=p_Y$, and the split distribution is given by the product distribution 
    $$
    p^{\calS} = p_{\bX } p_Y.
    $$
\end{enumerate}

 \begin{remark}
    As explained in \cite{AyPolaniVirgo20}, the split distribution $p^{\calS}$ can be also characterized as the unique minimizer of Kullback-Leibler divergence $D(.\parallel u)$ over $\Prob_{\calS}$ from the uniform distribution $u\in \Prob$. Another equivalent formulation leading to the split distribution is based on the minimazation of Kullback-Leibler divergence of the true distribution $p$ from the exponential family; see \cite{Ay18,Rauh}. 
 \end{remark}

 Finally, we are in position to define the coalitional game $v_p$ used in Information Attribution. For any true probability distribution $p\in\Prob$, the coalitional game $v_p$ maps any nonempty down-set $\calS\in\calD$ of predictors to a nonnegative real number 
     \begin{equation}\label{def:game}
         v_p(\calS) = D(p^{\calS} \parallel p_{\bX}p_Y)=\sum_{(\bx,y)\in\calX\times \calY} p^{\calS}(\bx,y) \log_2\frac{p^{\calS}(\bx,y)}{p_{\bX}(\bx)p_Y(y)},
     \end{equation}
     where $p^{\calS}$ is the split distribution and the term on the right-hand-side is the Kullback-Leibler divergence of $p^{\calS}$ from the product distribution $p_{\bX}p_Y$. We put $v_p(\emptyset)=0$.

     Since $p_{\bX}p_Y=p^{\{\emptyset\}}$, the probability distribution $p_{\bX}p_Y$ is precisely the split distribution corresponding to the empty predictor $\emptyset$. Thus, the number $v_p(\calS)$ can be interpreted as the amount of information contained in $p^{\calS}$ in addition to the information already represented by $p_{\bX}p_Y$. Observe that 
     $$
     v_p(\{\emptyset\})=D(p_{\bX}p_Y \parallel p_{\bX}p_Y)=0.
     $$
     Moreover, the assessment of predictor $\calS=\calP_V$ is the mutual information between $\bX$ and $Y$,
     $$
     v_p(\calP_V)=D(p\parallel p_{\bX}p_Y)=\sum_{(\bx,y)\in\calX\times \calY} p(\bx,y) \log_2 \frac{p(\bx,y)}{ p_{\bX}(\bx)p_Y(y)}=I(\bX;Y).
     $$
    
     The natural question is whether the nonnegative coalitional game $v_p$ has some additional properties such as those discussed in Section \ref{sec:coalprop}. It is easy to see that $v_p$ is monotone: If $\calS\subseteq \calT$, then $v_p(\calS)\leq v_p(\calT)$, where we may assume that $\calS\neq\emptyset$. Indeed, by nonnegativity of $D$ and the Pythagorean theorem of information geometry \cite[Theorem 2.8(2)]{Ay18},
     $$
     v_p(\calS)=D(p^{\calS} \parallel  p_{\bX}p_Y) \leq D(p^{\calS} \parallel  p_{\bX}p_Y) + D(p^{\calT} \parallel p^{\calS}) =D(p^{\calT} \parallel  p_{\bX}p_Y)=v_p(\calT).
     $$
   It was shown in \cite[Remark 7.5]{AyPolaniVirgo20} that the game $v_p$ is neither supermodular nor submodular.

   Information Attribution distributes the contribution of predictors $\calS\in\calD$ in the game $v_p$ given by \eqref{def:game} according to the hierachical value discussed in Section \ref{sec:hie}. Using the random-order approach, the contribution of predictor $\calS\in\calD$ is 
   \begin{equation}\label{def:IAvalue}
    \psi_{\calS}(v_p) = \frac{1}{|\calL(\calP_V)|} \sum_{f\in\calL(\calP_V)} \Delta_{\calS}^f(v_p) ,
   \end{equation}
   where $\calL(\calP_V)$ is the set of all linear extensions (admissible permutations) of $\calP_V$, and $\Delta_{\calS}^f(v_p)$ is the marginal contribution \eqref{def:MC} of predictor $\calS$ in game $v_p$ with respect to a linear extension $f$. Other alternative formulas to compute $\psi_{\calS}(v_p)$ are reviewed in Section \ref{sec:hie}. The examples of Information Attribution applied to different true distributions are discussed in \cite[Section 6]{AyPolaniVirgo20}.

\clearpage

\end{document}